\pgfplotsset{compat=1.14}
\providecommand{\DontPrintSemicolon}{\dontprintsemicolon}
\DeclareMathOperator*{\argmax}{arg\,max}
\newcommand{\E}[1]{\mathbb E \left[ #1 \right]}
\newcommand{\M}{M} 
\newcommand{\Ms}{M_S} 
\newcommand{\Gs}{G_S} 
\newcommand{\Eprime}{E'} 
\newcommand{\Eplus}{E^+} %
\newcommand{\Esafe}{E_{\text{safe}}} %
\newcommand{\Mat}{\mathcal{M}}
\newcommand{\opt}{\mathsf{OPT}}
\newcommand{\I}{\mathcal{I}}
\renewcommand{\O}{\mathcal{O}}
\newcommand{\EXP}[2][]{
    \ifthenelse{\equal{#1}{}}
    {\mathbb{E}\left[#2\right]}
    {\mathop{\mathbb{E}}_{#1}\left[#2\right]}
}
\newcommand{\PRO}[2][]{
    \ifthenelse{\equal{#1}{}}
    {\mathbb{P}\left(#2\right)}
    {\mathop{\mathbb{P}}_{#1}\left(#2\right)}
}
\newcommand{\sspi}{SSPI}
\newcommand{\psspi}{P-SSPI}
\newcommand{\oos}{OOS\xspace}
\newcommand{\1}[1]{\mathds{1}_{\left\{ #1 \right\}}}
\newcommand{\val}[2]{\mathrm{val}_{#1}(#2)}
\newcommand{\valEprime}[1]{\val{\Eprime}{#1}}
\newcommand{\valMs}[1]{\val{\Ms}{#1}}
\newcommand{\MM}{\mathcal{M}}
\newtheorem{theorem}{Theorem}[section]
\newtheorem{definition}{Definition}[section]
\newtheorem{claim}{Claim}[section]
\newtheorem{lemma}{Lemma}[section]
\crefname{claim}{Claim}{Claims}
\newtheorem{remark}{Remark}[section]
\newtheorem{observation}{Observation}[section]
\let\original@algocf@latexcaption\algocf@latexcaption
\long\def\algocf@latexcaption#1[#2]{%
  \@ifundefined{NR@gettitle}{%
    \def\@currentlabelname{#2}%
  }{%
    \NR@gettitle{#2}%
  }%
\original@algocf@latexcaption{#1}[{#2}]%
}
\title{Single-Sample Prophet Inequalities via Greedy-Ordered Selection\thanks{An extended abstract of this work, which merges and extends two previous preprints \citep{CaramanisEtAl21,DFLLR21}, appeared in the Proceedings of the 33rd ACM SIAM Symposium on Discrete Algorithms \citep{CaramanisDFFLLP22}.}}
\author{
Constantine Caramanis\thanks{
The University of Texas at Austin, USA. Email: 
\texttt{\{constantine,matthewfaw\}@utexas.edu}}
\and
Paul D{\"u}tting\thanks{
Google Research,
Z\"urich, Switzerland.
Email: \texttt{duetting@google.com}}
\and
Matthew Faw$^{\dagger}$\and
Federico Fusco\thanks{
Sapienza University of Rome, Italy.
Email: \texttt{\{fuscof,leonardi\}@diag.uniroma1.it}}
\and
Philip Lazos\thanks{
IOHK. Email: \texttt{philip.lazos@iohk.io}}
\and Stefano Leonardi$^\S$\and
Orestis Papadigenopoulos 
\and 
Emmanouil Pountourakis\thanks{
Drexel University, Philadelphia, USA.
Email: \texttt{manolis@drexel.edu}}\and
Rebecca Reiffenh{\"a}user\thanks{
University of Amsterdam, Amsterdam, The Netherlands.
Email: \texttt{r.e.m.reiffenhauser@uva.nl}}
}
\date{March 15, 2024}
\begin{document}

\maketitle

\begin{center}
    \textit{We dedicate this paper to the memory of Orestis, who left us far too early.}
\end{center}

\begin{abstract}
We study \emph{single-sample prophet inequalities} (SSPIs), i.e.,  prophet inequalities where only a single sample from each prior distribution is available. In addition to a direct and optimal SSPI for the basic single choice problem (Rubinstein et al., 2020), most existing SSPI results were obtained via an elegant, but inherently lossy reduction to order-oblivious secretary (OOS) policies (Azar et al., 2014). Motivated by this discrepancy, we develop an intuitive and versatile greedy-based technique that yields \sspi{}s \emph{directly} rather than through the reduction to \oos{}s. Our results can be seen as generalizing and unifying a number of existing results in the area of prophet and secretary problems. Our algorithms significantly improve on the competitive guarantees for a number of interesting scenarios (including general matching with edge arrivals, bipartite matching with vertex arrivals, and certain matroids), and capture new settings (such as budget additive combinatorial auctions). Complementing our algorithmic results, we also consider mechanism design variants. Finally, we analyze the power and limitations of different SSPI approaches by providing a partial converse to the reduction from \sspi{} to \oos{} given by Azar et al.

\hspace{10 pt}


\end{abstract}

\clearpage

\section{Introduction}

Prophet inequalities are fundamental models from optimal stopping theory. 
In the simplest version of prophet inequalities \citep{KS77,KrengelS78,SamuelCahn84}, a set of $n$ elements with rewards $r_1, r_2, \ldots, r_n$ sampled independently from known distributions $D_1, D_2, \ldots, D_n$ 
arrive in a fixed order. After each element is observed, the online algorithm can either accept the reward and stop, or drop it and move to consider the next element in the sequence. 
The goal is to show an $\alpha$-competitive \emph{prophet inequality}, i.e., an online algorithm such that the expected reward of the computed solution is at least a $1/\alpha$-fraction of the expected value of an optimal solution that can observe the values of all rewards before making a choice. The original work of \citet{KS77,KrengelS78} and \citet{SamuelCahn84} has established a tight $2$-competitive prophet inequality for the single-choice selection problem, i.e.,  the problem of selecting only one out of $n$ elements. Motivated also by applications to algorithmic mechanism design and to pricing goods in markets, later work has extended prophet inequalities to a broad range of problems, including matroids and polymatroids, bipartite and non-bipartite matching,  and combinatorial auctions \citep[e.g.,][]{Alaei11,ChawlaHMS10, KleinbergW12,DuttingK15,FeldmanGL15,DuttingFKL20,DuttingKL20}. 

For many applications, the assumption of knowing the distributions exactly, as in the prophet model, is rather strong. It is consequential to ask whether near-optimal or constant-factor prophet inequalities are also achievable with limited information about the distributions. 
A natural approach in this context (pioneered by \citet{AzarKW14}) is to assume that the online algorithm has access to a limited number of \emph{samples} from the underlying distributions, e.g., from historical data. 
Arguably, a minimal assumption under this approach is that the online algorithm has access to only a \emph{single} sample from each distribution. Surprisingly, even under this very restrictive assumption, strong positive results are possible. Importantly, for the classic single-choice problem one can obtain a $2$-approximate prophet inequality with just a single sample \citep{RubinsteinWW20}, matching the best possible guarantee with full knowledge of the different distributions.
The same problem, but with identical distributions, was considered by \citet{correa2019prophet}, who showed that with one sample from each distribution, it is possible to achieve a $\nicefrac e{(e-1)}$ guarantee. 
This bound was subsequently improved  by Kaplan~et~al.~\cite{KaplanNR20}, and by Correa and coauthors ~\cite{CorreaCES20,CorreaDFSZ21,correa2020sample}. 

The central question we study in our work is whether similar results can also be obtained for more general combinatorial settings. This question was first investigated by \citet{AzarKW14}, 
who showed that for some combinatorial problems it is possible to achieve the same asymptotic guarantees as with full knowledge of the distributions with just a single sample from each distribution. 
The results they give are largely based on an elegant reduction to \emph{order-oblivious secretary} (OOS) algorithms: in this model, there is no sample or other information about the distributions available; instead, a constant and random fraction of the input is \emph{observed} for statistical reasons. Then, only on the remainder of the instance, an adversarial-order online algorithm is run to select elements. The reduction of \sspi{}s to \oos{}s 
relies on the following observation: with the single sample available to an SSPI, one can replace the sampled part of the instance in the first, statistical phase of any OOS. Since the sample values are drawn from the exact same distributions as those that were part of the original problem instance, this results in the exact same input distribution as the original \oos{} (and, therefore, the same competitive guarantee).

However, \sspi{}s derived from \oos{}s in this fashion are inherently lossful: the information from the sample available to such an SSPI remains partly unused, since the secretary algorithm utilizes only those sampled values that belong to elements in its statistical, not its selection phase. 
Furthermore, the elements used in the statistical phase are rejected \emph{automatically}.
We deviate from designing SSPIs via this construction and instead give a versatile and intuitive technique to design SSPIs \emph{directly}.
Our focus is to design SSPIs directly for matchings, matroids, and combinatorial auctions that are central to the area of secretary and prophet algorithms.

Besides presenting improved or completely new \sspi{}s for many such problems, we also investigate the power and limitations of the \sspi{} paradigm with respect to prior-free \oos{}s on one side and full-information prophet inequalities on the other. Here, we take important steps towards solving the central open question whether constant-factor \sspi{}s exist for the prominent, but notorious problems of combinatorial auctions and general matroids.


\subsection{A New Framework for Single-Sample Prophet Inequalities}

Our first contribution is a framework for deriving \sspi{}s \emph{directly} via greedy-ordered selection algorithms, as we describe below. This framework can be applied to a range of problems in the \sspi{} setting, two of which we put special focus on: first, we derive \sspi{}s with considerably improved competitive ratios for variants of max-weight matching problems on general/bipartite graphs, extending also to strategic settings.
Second, we employ a richer version of the same design principles to provide the first constant-factor \sspi{} for a combinatorial auction setting, namely combinatorial auctions with \emph{budget-additive} buyers. 
Note that we are able to obtain --- especially the latter result --- only via exploiting 
the high abstraction level and clear structure of probabilistic events within our framework.

\subsubsection{Single-Sample Prophet Inequalities via Greedy-Ordered Selection}

Consider the general problem of selecting online a feasible subset from a number of elements, which arrive in some 
adversarial order. 
We propose a very general and versatile technique for selecting elements in our online algorithms that can be seen as a generalization of several approaches in the literature. At its core, our method is based on two basic and simple insights. As a first ingredient, we use the well-known greedy algorithm, which considers all elements of a set decreasingly by their value and selects an element if and only if the resulting set of chosen elements is still feasible.
The greedy algorithm, although it is offline in the sorting phase, behaves like an online algorithm on the ordered set of elements.  
Our second ingredient is the observation that drawing two values $r_e,\, s_e$ from the distribution of an element $e$ is equivalent to drawing $v_{e,1},\, v_{e,2}$ independently from the same distribution and only subsequently flipping a fair coin to decide which to use as the realization of $r_e$, and which as the sample $s_e$. On a high level, our algorithms work as follows:
\begin{itemize}
    \item{\textit{Greedy Phase.}} First, run a greedy procedure on the sampled values of all elements, and remember for every element $e$ the threshold value $\tau_e$ for which the following holds: element $e$ would have been selected by greedy if we added it to the set of samples with a value $r_e>\tau_e$.
    \item{\textit{Online Phase.}} Second, upon arrival of an element in the adversarial order of \sspi{}, observe its reward $r_e$. If $r_e>\tau_e$, and $e$ together with all the selected elements so far are still feasible, collect $e$.
\end{itemize}

The reason that variations of this simple algorithmic idea yield constant-factor approximations is a combination of our two observations from above. First, the \emph{online} character of greedy after sorting all elements ensures that whether any element is chosen only depends on the \emph{larger} values considered before, but is independent from any later steps, which we heavily rely on in our proofs.
Second, the described process of first drawing \emph{two} values from each element's distribution and only later deciding which is the actual reward relates the quality of the greedy solution on the samples to the online selection process on the actual rewards, and ensures the described thresholds are feasible.

As mentioned before, this type of idea is employed somewhat differently in many existing works. For example, and most closely related to our work, the optimal \sspi{} of \citet{RubinsteinWW20} for the single-choice setting can be interpreted as an implementation of the same paradigm. Also closely related is work by \citet{KP09}, who use greedy pricing combined with the random process of assigning values to the statistics or selection phase of a secretary algorithm (instead of sample or reward value in a \sspi{}) for bipartite weighted matching, and \citet{ma2016simulated}, who apply the greedy idea to submodular secretary problems.

We provide a general and abstract view towards methods of the above type that aids the proofs of our approximation results by making the two key insights presented above very explicit.
First, we use the online-offline-properties of greedy to state an \emph{equivalent offline version} to each of our \sspi{}s, enabling us to much more easily keep track of probabilistic events and the computed thresholds.
Second, we use the view of deciding only at the very last moment which of the two values drawn for an element is the sample or the reward to define (and bound the size of) a set of \emph{safe} elements which even the adversarial arrival order cannot prevent us from picking. The above abstraction is indeed crucial for obtaining our results, since compared to previous work, we face some extra challenges. First, our problems are combinatorially richer than, e.g., the single-choice prophet inequality, requiring careful and dedicated adjustments to the greedy algorithms and thresholds we use for the first algorithm phase.
Second, unlike applications in the secretary paradigm, the arrival orders in our model are fully adversarial, which poses an additional obstacle to guaranteeing a \emph{safe} set in the above sense. Finally, the fact that (again opposed to secretary algorithms) we deal with more than one value associated with the same element (sample and reward) breaks the convenient independence of some of the greedy algorithm's decisions. This requires delicate handling of the resulting dependencies.

\subsubsection{New and Improved Single-Sample Prophet Inequalities}

We now present the results we derive within our greedy framework, which allows us to significantly improve on the competitive guarantees of many existing \sspi{}s, and provide the first constant-factor \sspi{}s for some problems previously not covered by this paradigm.

\paragraph{Max-weight matching with general edge arrivals} In \Cref{sec:generalMatching}, we apply the above ideas to the case of maximum weight matching in general (non-bipartite) graphs with edge arrivals. Here, edges $e\in E$ of a graph $\mathcal{G}$ arrive one by one in adversarial order, each associated with a certain nonnegative weight $w(e)$. The algorithms aim at selecting a maximum-weight matching $M\subseteq E$, such that each vertex has at most one incident edge in $M$. Before our work, the state-of-the-art for this problem was a $512$-competitive algorithm derived via \citet{AzarKW14} and the $256$-competitive \oos{} in \citet{FSZ16} for bipartite graphs, by choosing a random bipartition. We improve this to a $16$-competitive \sspi{} derived through our framework. The proof of this result is exemplary of the key insights that drive our analyses.

\paragraph{Max-weight bipartite matching} In \Cref{sec:bipartite}, we focus on the case of bipartite matching with vertex arrival where the items are available offline and the buyers arrive in adversarial order. We provide an $8$-competitve \sspi{}, again improving considerably over the $256$-competitive \sspi{}.\footnote{We observe however that a \sspi{} with better competitive ratio for this problem can also be achieved by adjusting the analysis in \citet{KP09} to become order-oblivious, which yields a $13.5$ competitive policy.} The improved factor is obtained by exploiting the fact that due to vertex arrival, all edges of the same buyer become available at once. We complement this result with a truthful mechanism, which is based on the idea of tightening the threshold requirements for each edge in a way that agents cannot hurt the approximation too much by picking the \emph{wrong} elements on arrival. In particular, we give a $16$-approximate mechanism, which to the best of our knowledge is the first single-sample result for this setting.
Finally, we show an $8$-competitive \sspi{} for the closely related (and somewhat more restricted) problem of selecting an independent set of elements in a transversal matroid, improving on the previous $16$-approximation by \citet{AzarKW14}.

\paragraph{Budget-additive combinatorial auctions}
In \Cref{sec:budgetadditive} we consider a more general problem, where buyers can be allocated more than one item, and each buyer $b$'s valuation is of the form $v_b(S)=\min\{\sum_{i\in S}v_b(i),\, C_b\}$, for $S$ a set of items and $C_b$ a publicly known, buyer-specific budget.
Here, the buyer's valuation for being assigned additional items changes with the assignment of any previous ones. This dependence makes the design and, to an even greater extent, the analysis of our algorithms considerably more involved, but can still be resolved without incurring too much loss.
As a result, we are able to show a $24$-competitive \sspi{} for this model. This, to the best of our knowledge, is the first \sspi{} for a combinatorial auction setting beyond the simple case of additive valuations, and raises hope that the scope of constant-factor \sspi{}s reaches even further towards that of full-information prophet inequalities. 

\paragraph{Matroids} 
In \Cref{sec:reduction}, we further demonstrate the potential of directly designing \sspi{}s by considering matroids which satisfy a natural partition property. As we observe, many \oos{} policies for matroids rely on decomposing the given matroid into several {\em parallel} (in terms of feasibility) rank-$1$ instances, and then run an \oos~policy (in parallel) on each of these instances. Rather than using the $4$-competitive single-choice \oos{} policy of \citet{AzarKW14}, the same partitioning allows us to leverage the results of \citet{RubinsteinWW20}, which give a $2$-competitive policy for the \sspi~problem on rank-1 matroids. This allows us to improve the competitive guarantees for almost all the matroids considered in \citet{AzarKW14} by a factor of $2$.

\paragraph{Summary of our results and additional properties}

In \Cref{table:main}, we outline our improved \sspi{}s and compare them to previous work. As we have already mentioned, almost all existing \sspi{}s follow from the reduction to \oos{}, due to \cite{AzarKW14}, in combination with existing \oos{} policies (see \citep{DP08} for transversal matroid, \citep{KP09} for graphic matroid and bipartite matching with vertex arrivals, and \citep{Soto11} for co-graphic, low density, and column $k$-sparse linear matroids). 

We remark that (with the exception of budget-additive combinatorial auctions) all of our algorithms are \emph{ordinal} (also known as {\em comparison-based}), in the sense that they do not require accurate knowledge of the values of the random variables, but only the ability to compare any two of them. Further, we emphasize that our competitive guarantees hold against an {\em almighty fully adaptive adversary} -- that is, an adversary who is a priori aware of \emph{all} sample/reward realizations and can decide on the arrival order of the elements in an adaptive adversarial manner. Finally, we highlight that our improved \sspi{}s can be used to provide truthful mechanisms with improved approximation guarantees compared to \citet{AzarKW14} for welfare and revenue maximization.\footnote{Note that, for the case of revenue maximization, and since our algorithms make use of all the available samples, an extra sample is needed to be used as {\em ``lazy reserves''} (see \citet{AzarKW14} for a definition and more details).}

\begin{table}
  \centering
  \begin{tabular}{ccc}
    \toprule
    Combinatorial set     & Previous best     & Our results  
    \\
    \midrule
     General matching (edge arrivals) & $512$ & $16$ 
     \\ 
     Budget-additive combinatorial auction & N/A & $24$ 
     \\
     Bipartite matching (edge arrivals) & $256$ & $16$ 
     \\ 
      & $6.75$ (degree-$d$) & 16 (any degree) 
      \\ 
      & $\O(d^2)$-samples & $1$ sample  
      \\ 
     Bipartite matching (vertex arrivals) & $13.5$ & $8$ 
     \\ 
     Transversal matroid & $16$ & $8$  
     \\ 
     Graphic matroid & $8$ & $4$ 
     \\ 
     Co-graphic matroid & $12$ & $6$ 
     \\
     Low density matroid & $4\gamma(\M)$\footnotemark  & $2\gamma(\M)$  
     \\
     Column $k$-sparse linear matroid & {$4k$} & $2k$  
     \\
    \bottomrule
  \end{tabular}
  \vspace{4pt}
  \caption{Summary of main results}
  \label{table:main}
\end{table}
\footnotetext{The {\em density} of a matroid $\M(E, \I)$ is defined as $\gamma(\M) = \max_{S \subseteq E} \frac{|S|}{r(S)}$, where $r$ is the rank function.}

\subsection{Limitations of Single-Sample Prophet Inequalities}


We have seen that for a wide range of problems, \sspi{}s provide guarantees close to those of traditional prophet inequalities and \emph{direct} \sspi{}s as in our results improve consistently on those derived via the reduction to \oos{}s. 
This positions the \sspi{} paradigm in between full-information prophets and prior-free \oos{}s. 
In consequence, \emph{direct} \sspi{}s raise hope for problems that offer good prophet inequalities, but remain resistant to constant-factor \oos{}s. Here, while we are making some progress with respect to combinatorial auction settings by providing a constant-factor \sspi{} for the budget-additive case, our results on matroids -- although improving in terms of competitive ratio -- do not extend to settings beyond those covered by constant-factor \oos{}s. 
This limitation to our progress is, as we observe, not a coincidence. In fact, we analyze more generally the power of \sspi{}s versus that of \oos{}s, and show the following partial reverse to the reduction of \citet{AzarKW14}: 

\vspace{1em}
\textsc{Theorem}.  (\textsc{informal}) {\em 
For a large and intuitive subclass of \sspi{}s, which we call \emph{pointwise} or \psspi{}s, the existence of a constant-factor \psspi{} policy implies that of an according \oos{} policy.
}
\vspace{1em}

This shows that with our, and all other existing \sspi{} methods, there is no hope to achieve constant-factor \sspi{}s for problems that do not allow for constant-factor \oos{}s. For example, if any of the existing \sspi{} strategies yielded a constant approximation to the general matroid prophet inequality problem, this would also solve the famous matroid secretary conjecture of \citet{BIK07}\footnote{
We further remark that
 our reduction for \psspi{}, together with our \psspi{} for budget-additive combinatorial auction, also implies a constant-factor \oos{} for this problem.}.

\subsection{Further Related Work}
\label{sec:relatedwork}

A great deal of attention has been given to multi-choice prophet inequalities under combinatorial constraints such as uniform matroid constraints \citep{hajiaghayi2007automated,Alaei11} and general matroid constraints \citep{ChawlaHMS10,KleinbergW12,FSZ16}.
A number of works have obtained prophet inequalities for matchings and combinatorial auctions, including 
\citep{AHL12,FSZ14,ehsani2018prophet,GravinW19,DuttingFKL20,DuttingKL20,EFGT20,CorreaC23}. In particular, \citet{GravinW19} provide prophet inequalities for weighted bipartite matching environments under edge arrivals, and show a lower bound of $2.25$ on the competitive ratio of any online policy for this setting, proving that this problem is strictly harder than the matroid prophet inequality. Quite recently, \citet{EFGT20} provided a near-optimal prophet inequality for general weighted {\em(non-bipartite) matching} under edge arrivals. Beyond these settings, recent work has considered the prophet inequality problem under arbitrary packing constraints \citep{R16,RS17}.

Regarding the prophet inequality problem in the limited information regime, in addition to their meta-result connecting \sspi{}s with \oos algorithms, \citet{AzarKW14} provide a threshold-based $(1-\O(\nicefrac{1}{\sqrt{k}}))$-competitive algorithm for $k$-uniform matroids.
Furthermore, \citet{RubinsteinWW20} develops a $(0.745 - \O(\epsilon))$-competitive algorithm for the single-choice IID case, using $\mathcal{O}(\nicefrac{n}{\epsilon^6})$ samples from the distribution, improving on the results of \citet{correa2019prophet}. 
\citet{Guo0T021} further improve this bound to $O(\nicefrac{n}{\epsilon^2})$ samples, while \citet{CorreaCES24} show that $O(\nicefrac{n}{\epsilon})$ samples suffice. In very recent work, \citet{CristiZ24} show that a constant number of samples from each distribution are also sufficient to get $\epsilon$-close to the optimal ratios for the random-order and the free-order prophet inequality problem.
Prophet inequalities that require a polynomial number of samples can also be obtained via the balanced prices framework \citep{FeldmanGL15,DuttingFKL20}.

A main motivation for studying the prophet inequality problem comes from connections to mechanism design. While optimal single-parameter mechanisms for both welfare  \citep{v61,g73,c97} and revenue \citep{M81} have been well-understood for decades, there has been an extended study of simple and practical mechanisms that approximate these objectives. Prophet inequalities are known to be a powerful tool for designing simple posted-price mechanisms.
\citet{AzarKW14} show how to apply the results of \citet{ChawlaHMS10,amdw13,dry10} to obtain mechanisms in settings where the mechanism designer only has access to a single sample (or a constant number of samples) from the distribution of the agents' values.
A recent work of \citet{Duetting20} also investigates single-sample mechanism design for two-sided markets. The authors prove matching upper and lower bounds on the best approximation that can be obtained with one single sample for subadditive buyers and additive sellers. Moreover, computationally efficient black-box reductions that turn any one-sided mechanism into a two-sided mechanism with a small loss in the approximation, while using only one single sample from each seller, are provided.

In concurrent and independent work, a greedy technique similar to our framework was implemented for the problem of online weighted matching by \citet{KaplanNR22}. Their main model, \emph{adversarial-order model with a sample}, differs from ours in that it is closer to secretary algorithms than \sspi{}: instead of having two values for each element, reward and sample, they assume the bipartite graph to consist of a \emph{historic} part that is available to the algorithm, and the actual problem instance. Using an intermediate step, the so-called two-faced model, they derive constant factor approximation for \sspi \ for bipartite and general matching. In particular, they show a $13.5$-approximation for the general graph, edge arrival model, and a $3 + 2\sqrt 2 \approx 5.83 $-approximation for the bipartite, vertex arrival model. 
 In the most recent version of their paper they improve the $13.5$ bound to $11.66$ \cite{KaplanNR22}.

\section{Edge Arrival in General Graphs}
\label{sec:generalMatching}
In this section, we consider online matching on a general graph $\mathcal{G}=(V,E)$ with edge arrivals, where the weight of each edge $e \in E$ is drawn independently from a distribution $D_e$. We recall that for the offline problem where the edge weights are known a priori, the greedy algorithm that collects edges in a non-increasing order of weight while maintaining feasibility is a $2$-approximation. 
Our algorithm outlines very well the idea we described above: to bridge the gap between the fixed order employed by greedy and the adversarial one of prophet inequalities (where greedily adding edges would be arbitrarily bad), we utilize the independent single samples $S=\{s_e \mid e\in E\}$ that we have from all the edge distributions.

Note that throughout we consider \emph{weighted sets} to denote our edge sets and allocations, in the sense that they also contain -- for each edge $e$ -- the associated weight $w(e)$. Furthermore, we assume to break ties uniformly at random before running the algorithm, i.e. for the set of all drawn numbers that have value $x$, pick a permutation $\pi$ of those uniformly at random and define a draw of the same value to be larger than another if and only if it comes first in $\pi$.

The algorithm first computes an {\em offline} greedy matching $\Ms$
on the graph with edge weights $S$. Then, for each vertex $v$, it interprets the weight of the edge incident to it in the greedy solution as a {\em price} $p_v$ for the adversarial-order online algorithm, with the convention that  $p_v = 0$ if there is no edge in $\Ms$ incident to $v$. In the online phase, whenever an edge $e = \{u,v\}$ arrives with weight $r_e$ and both endpoints are free (i.e., not already matched), it is added to the matching $\M$ if and only if $r_e \ge \max\{p_v, p_u\}$. To facilitate the analysis, together with the solution $\M$, the algorithm builds $\Eprime$, a subset of $E$ containing all the edges such that $r_e \geq \max\{p_v, p_u\}$, i.e., all edges that are price-feasible. Note that $\Eprime$ is a superset of the actual solution $\M$ and may not be a matching. See \Cref{AlgoOn} for a pseudocode.

Although the above algorithm is simple, the analysis is quite challenging.
As pointed out before, the greedy method can be seen as an offline algorithm as well as an online algorithm with the extremely simplifying assumption of a fixed and weight-decreasing arrival order. We combine such a shift of viewpoint with a suitable reformulation of the random processes involved. More precisely, we look at our (online) algorithm as the byproduct of a run of a (fixed-order!) greedy strategy on a random part of a fictitious graph that contains each edge \emph{twice}: once with the true weight $r_e$, and once with the sampled weight $s_e$.

\begin{algorithm2e}[!t]
\DontPrintSemicolon
\caption{Prophet Matching with Edge-Arrivals}
    Set $\Eprime=\emptyset$, $\M= \emptyset$\;
    Compute the greedy matching $\Ms$ on the graph with edge weights according to sample $S$\;
    \For{each $e = \{u,v\} \in \Ms$}{
    Set $p_{u}=p_{v}=s_e$}
    \For{each vertex $u \in V$ unmatched in $\Ms$} {
    Set $p_u=0$}
    \For{each arriving $e = \{u,v\} \in E$}{
        \If{$r_e \geq \max\{p_u,p_v\}$}{
            Add $e$ to $\Eprime$. \tcp{This is just for the analysis.}
            \If{$u,v$ are not matched in $\M$}{
                Add $e$ to $\M$ with weight $r_e$
            }
        }
    }
    \Return $\M$
\label{AlgoOn}
\end{algorithm2e}

\paragraph{Equivalent offline algorithm} To analyze the competitive ratio of \Cref{AlgoOn}, we state an offline, greedy-based algorithm with similar properties that is easier to work with.
This offline procedure considers an equivalent stochastic process generating the edge weights in $S$ and $R$. Instead of first drawing all the samples and then all the realized weights, we consider a first stage in which, for each edge $e$, two realizations are drawn from $D_e$. Then, all drawn realizations ($2 |E|$ in total) are sorted in decreasing order according to their values, breaking ties at random as explained above. 
Finally, the offline algorithm goes through all these values in that order; each value is then associated with $S$ or $R$ by a random toss of an unbiased coin. Consequently, the second (i.e., smaller) weight of the same edge arriving later is associated to the remaining category with probability $1$. In particular, an edge $e$ is marked as $R$-used, when $r_e > s_e$, and $S$-used, otherwise. It is not hard to see that the distributions of $S$ and $R$ are identical for both random processes. The pseudocode for this offline procedure is given in \Cref{AlgoOff}. As already mentioned, it mimics in an offline way the construction of $\Ms$ and $\Eprime$ of \Cref{AlgoOn}, as shown below.

\begin{algorithm2e}[!t]
\DontPrintSemicolon
\caption{Offline Simulation for Matching with Edge-Arrivals}
\label{AlgoOff}
    Set $\Eprime=\emptyset$, $\Ms= \emptyset$, $\M = \emptyset$, and $V^S=V$ \;
    For each $e\in E$, draw from $D_e$ two values $a_{e,1}$ and $a_{e,2}$\;
    Order $A=\{a_{e,1}, a_{e,2}|e\in E\}$ in decreasing fashion \;
    \For{each value $a\in A$ in the above order}{
        \If{$a$ corresponds to edge $e = \{u,v\}$ that has never been observed before}{
        {Flip a fair coin \;
            \eIf{Heads}
            {
                Mark $e$ as \emph{$R$-used} \;
            \If{$u \in V^S$ and $v \in V^S$}{
          Add $e$ to $\Eprime$ with  weight $a$\;}
            }
            {
        Mark $e$ as \emph{$S$-used}\;
        \If{$u \in V^S$ and $v \in V^S$}{
        Add $e$ to $\Ms$ with weight $a$, remove $u$ and $v$ from $V^S$\;
        }
        }        
        }
        }
        \ElseIf{$a$ corresponds to edge $e = \{u,v\}$ which is $R$-used, $u\in V^S$ and $v\in V^S$}{
        Add $e$ to $\Ms$ with weight $a$, remove $u$ and $v$ from $V^S$\;
        }
    }
    \For{each $e$ in the same order as \Cref{AlgoOn}}{
        \If{$e$ in $\Eprime$ and $\{e\} \cup \M$ is a matching }
        {Add $e$ to $\M$ with weight $r_e$}}
    \Return $\M$.
\end{algorithm2e}

\begin{claim}
\label{claim:equivalence}
    The sets $\Eprime$, $\Ms$, and $\M$ are distributed in the same way when calculated by \Cref{AlgoOn} as when calculated by \Cref{AlgoOff}.
\end{claim}
\begin{proof}
To begin, note that the distribution of all the rewards in $R$ and the samples in $S$ is exactly the same in both \Cref{AlgoOn} and \Cref{AlgoOff}. This follows by the fact that, for each edge $e$, the distribution of $r_e$ remains the same if we simply sample $r_e$ from $D_e$, independently, or if we sample two values $a_{e,1}$ and $a_{e,2}$ from $D_e$, independently, and then set $r_e$ equal to one of the two equiprobably. The symmetric argument holds for any sample value in $S$. 

We prove that for {\em any} realization of the $S$ and $R$ values, the corresponding sets of edges of interest are exactly the same in the offline and online setting. This is clearly enough to conclude the proof. 
To avoid complication, we deterministically use the same tie breaking rules in both the online and offline case. 
    
If we restrict the offline algorithm to consider only samples $S$, we recover exactly the procedure to generate a greedy matching with respect to the samples. This is true because what happens to the values in $R$ has no influence on $\Ms$. Therefore, the two versions of $\Ms$ in the different algorithms follow the same distribution. Actually, this is even true step-wise: at any point in time during \Cref{AlgoOff}, $\Ms$ contains the greedy matching on samples {\em restricted} to the edges whose $s_e$ is greater than the value being considered at that specific time. 
    
Now, we show that the two versions of $\Eprime$ coincide. This is enough to conclude the proof, as the matchings $M$ are extracted in the same way from $\Eprime$ (given the same fixed arrival order). To this end, consider an (online) iteration of \Cref{AlgoOn} in which an edge $e=\{u,v\}$ is added to $E'$. In such event, the online version of $\Ms$ contains no edges incident to $u$ or $v$ with larger sample than $r_e$, because this would imply a price higher than $r_e$ in at least one of the two endpoints. Since the $\Ms$ are the same in both algorithms, when \Cref{AlgoOff} considers $e$, it holds that $u,\, v\in V^S$ ($u$ and $v$ are not yet matched in $\Ms$) and \Cref{AlgoOff} will also add $e$ to $E'$. The converse argument is immediate.
\end{proof}
Given the above result, we use in the following $\Eprime$, $\Ms$, and $\M$ mainly to refer to \Cref{AlgoOff}, but the same would apply to \Cref{AlgoOn}.

\paragraph{Correctness and competitive analysis} 

The correctness of \Cref{AlgoOn} is guaranteed by the fact that the algorithm never adds to $\M$ any edge incident to an already matched vertex. Thus, the set of edges collected in $\M$ is a valid matching. For the competitive analysis we need some more work. In particular, we show that \Cref{AlgoOn} is $16$-competitive, namely, for any adversarial arrival order of the edges, the algorithm retains in expectation at least a $\nicefrac1{16}$-fraction of the expected value of optimal offline matching $\opt$. The high-level idea of the proof lies in relating the optimal offline matching and our solution by using $\Ms$ (which yields a $2$-approximation of $\opt$) and a carefully chosen subset of $\Eprime$, the {\em safe} edges. 

\begin{definition}[Safe edges] \label{def:safe}
    We call an edge $e = \{u,v\}$ safe for its endpoint $v$ if the following conditions are met:
    \begin{itemize}
        \item[(i)] $e$ is the only edge in $\Eprime$ incident to $v$.
        \item[(ii)] There is no edge in $\Eprime$ that is incident to $u$ and has smaller weight than $r_e$.
    \end{itemize}
\end{definition}

Let us denote $e_v$ as the edge incident to $v$ in $\Eprime$ of maximal reward (if such an edge exists in $\Eprime$). We are able to show the following convenient property for this edge being safe for $v$:


\begin{lemma}\label{lem:matching:safeprob}
For any vertex $v$ and edge $e$ incident to $v$ for which the probability that $e_v=e$ is non-zero, 
we have that $    \mathbb{P}[e_v \text{ is safe for }v \mid e_v = e] \geq \frac{1}{4}.$ 
\end{lemma}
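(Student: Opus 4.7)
The plan is to work entirely inside the offline algorithm \Cref{AlgoOff} and exploit the independent coin flips that decide, for each edge, whether it is $R$-used or $S$-used. Conditioning on $e_v = e$ forces the coin for $e$ to be heads (so $r_e = a_{e,1}$ and $s_e = a_{e,2}$), both $u$ and $v$ to lie in $V^S$ when $a_{e,(1)}$ is processed, and every edge $e' \ni v$ with $a_{e',(1)} > a_{e,(1)}$ to fail its $V^S$ check at its first-value step (otherwise an $S$-used $e'$ would have displaced $v$ from $V^S$ before $a_{e,(1)}$, or an $R$-used $e'$ would have entered $\Eprime$ with reward strictly larger than $r_e$, contradicting $e_v = e$). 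Since $V^S$ only shrinks, none of those ``high'' edges can join $\Ms$ at a subsequent second-value step either. Hence the only edges that can break safety at $u$ or $v$ lie in
\[
E_v^- := \{e' \ni v : a_{e',(1)} < a_{e,(1)},\ e' \neq e\}, \qquad E_u^- := \{e' \ni u : a_{e',(1)} < a_{e,(1)},\ e' \neq e\},
\]
which are disjoint in the simple graph $\mathcal{G}$, and whose coins are independent Bernoulli$(1/2)$'s that are unaffected by the conditioning (the event $e_v=e$ is measurable with respect to coins of edges with first-value at least $a_{e,(1)}$).

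Next I would establish the one-sided bound $\mathbb{P}[A_v \mid e_v = e] \leq 1/2$, where $A_v$ is the event ``some $e' \in E_v^-$ enters $\Eprime$'' (with the symmetric $A_u$ for $u$). Let $e'_\star$ be the first edge of $E_v^-$, in processing order after $a_{e,(1)}$, whose $V^S$ check passes at its first-value step. The identity of $e'_\star$ is measurable with respect to the filtration strictly before $e'_\star$'s own coin is revealed, so that coin is still a fresh Bernoulli$(1/2)$. If it is tails, $e'_\star$ is added to $\Ms$ and $v$ leaves $V^S$, so no subsequent candidate can pass its $V^S$ check and $A_v$ cannot occur; if it is heads, $A_v$ happens directly. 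Thus $\mathbb{P}[A_v \mid e_v = e] \leq 1/2$, and the symmetric argument yields the same for $A_u$.

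To combine the two sides, I would condition on every coin flip outside of $E_v^- \cup E_u^-$ and work with the resulting product measure on the head/tail indicators $\{X_{e'}\}_{e' \in E_v^- \cup E_u^-}$. The key monotonicity to verify is that both $\mathbf{1}[A_v]$ and $\mathbf{1}[A_u]$ are coordinatewise non-decreasing in $X$: flipping an $X_{e'}$ from tails to heads either directly turns an $\Ms$-addition into an $\Eprime$-addition (a candidate violation) or suppresses an $\Ms$-addition that would have removed an endpoint from $V^S$ and blocked downstream candidates on either side. Consequently $\mathbf{1}[\neg A_v]$ and $\mathbf{1}[\neg A_u]$ are both non-increasing in $X$, and the FKG/Harris inequality for product measures gives
\[
\mathbb{P}[\neg A_v \cap \neg A_u \mid \mathcal{F}] \;\geq\; \mathbb{P}[\neg A_v \mid \mathcal{F}] \cdot \mathbb{P}[\neg A_u \mid \mathcal{F}],
\]
where $\mathcal{F}$ is the $\sigma$-algebra generated by the fixed coins and the event $e_v=e$. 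Taking the outer expectation and combining with the one-sided bounds from the previous step yields $\mathbb{P}[e_v \text{ safe for } v \mid e_v = e] \geq 1/4$.

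The main obstacle will be the cross-dependence between the two endpoints: the $V^S$ check for $e' = \{v,v'\} \in E_v^-$ can be influenced by the coin of $\{u,v'\} \in E_u^-$ through $v'$'s status, so $A_v$ and $A_u$ are not literally independent. The FKG route neatly turns this cross-effect into a positive correlation, but the most delicate book-keeping is to certify that every coin flip in $E_v^- \cup E_u^-$ truly acts monotonically on \emph{both} indicators simultaneously (in particular that a heads never helps avoid a violation on either side), which is where the one-sided ``first decisive event'' viewpoint has to be shown to survive the extra cross-terms.
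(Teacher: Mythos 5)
Your setup is sound up to the final combination step: the reduction of safety to $\neg A_v\cap\neg A_u$, the measurability of $\{e_v=e\}$ with respect to coins of edges with first value at least $a_{e,(1)}$, and the two one-sided bounds $\mathbb{P}[A_v\mid e_v=e]\le 1/2$, $\mathbb{P}[A_u\mid e_v=e]\le 1/2$ via the ``first decisive edge'' stopping argument are all correct. The gap is exactly the point you flag as delicate: the coordinatewise monotonicity needed for FKG is \emph{false}, so the inequality $\mathbb{P}[\neg A_v\cap\neg A_u]\ge\mathbb{P}[\neg A_v]\,\mathbb{P}[\neg A_u]$ is not justified. Concretely, take vertices $v,u,v',w$ and edges $e=\{v,u\}$, $e'=\{v,v'\}$, $f=\{v',w\}$, $g=\{u,w\}$ with first values in that decreasing order, and fix the coins of $f$ (tails) and $g$ (heads) in $\mathcal{F}$. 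If $e'$ is tails, it enters $\Ms$ and removes $v'$ from $V^S$, so $f$ fires vacuously, $w$ stays in $V^S$, and $g$ enters $\Eprime$: $A_u=1$. If $e'$ is heads, $v'$ stays in $V^S$, so $f$ enters $\Ms$ and removes $w$ from $V^S$, blocking $g$: $A_u=0$. Thus flipping $X_{e'}$ from tails to heads \emph{decreases} $\mathbf{1}[A_u]$ while increasing $\mathbf{1}[A_v]$ --- a heads can indeed ``help avoid a violation'' on the other side by enabling an intermediate $S$-used edge that knocks a third vertex out of $V^S$. Without monotonicity of both indicators in the same direction, Harris/FKG gives nothing, and the marginal bounds alone only yield the useless $\mathbb{P}[\neg A_v\cap\neg A_u]\ge 0$.

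The repair is to abandon the decoupling and argue sequentially, which is what the paper does: process the decisive edges in chronological order. The first edge incident to $u$ or $v$ that passes its $V^S$ check after $e$ has a fresh coin and, with probability at least $1/2$, is $S$-used, removing its endpoints from $V^S$ and settling one of the two safety conditions; conditioned on that outcome and on the entire history, the next decisive edge at the remaining endpoint again has a fresh fair coin, giving another factor $1/2$ by the chain rule. This avoids any correlation inequality because each factor is a conditional probability given the past, not a marginal. (A minor additional point: the paper also treats edges parallel to $e$ as a separate case, which your ``disjoint in the simple graph'' assumption sidesteps.)
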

\begin{proof}
 Recall that an edge is marked as $R$-used if its realized reward is greater than the corresponding sample. In the opposite case, the edge is marked as $S$-used. By construction of \Cref{AlgoOff}, $\Eprime$ contains only $R$-used edges. Now consider the time in which some edge $e_v = e = \{v,u\}$ with value $a$ is added to set $\Eprime$. By definition of $e_v$, this is the first edge added in $\Eprime$ that is incident to vertex $v$. Consider what happens the next time some edge $e'$ of value $a'$, incident to either $u$ or $v$ (or both), and such that both endpoints of $e'$ are in $V^S$, is parsed by \Cref{AlgoOff}. We distinguish among the following cases:

\paragraph{Case 1: $e' = \{v,u'\}$, where $u' \neq u$} In this scenario, we see that $e' = \{v,u'\}$, that is, $e'$ is incident to vertex $v$ but not $u$. We now show that, with probability at least $\nicefrac{1}{2}$, the first condition of \Cref{def:safe} is satisfied, by considering two subcases:
\begin{itemize}
    \item[\bf(a)] In the case where $e'$ has been observed before (i.e., $a'$ is its second realization), then $e'$ is $R$-used, and thus $a'$ is a sample value. Indeed, if $e'$ were $S$-used, then at least one of the its endpoints would no longer be in $V^S$, a fact which would contradict the choice of $e'$. Therefore, since $e'$ is $R$-used and both its endpoints are in $V^S$ by the time $a'$ is parsed, vertex $v$ (the only common vertex between $e$ and $e'$) is now removed from $V^S$ and therefore no other edge incident to $v$ can be added to $E'$. Thus, $e_v$ must be the only edge in $\Eprime$ incident to $u$, as desired.
    \item[\bf(b)] In the case where $e'$ has not been observed before by \Cref{AlgoOff}, it is easy to see that, with probability $\nicefrac{1}{2}$, $e'$ is $S$-used. Therefore, by the same argument as above, $e_v$ is the only edge in $E'$ incident to $v$. Once the first condition of \Cref{def:safe} is satisfied, let $e'' = \{u,u''\}$ be the next edge of value $a'' < a' < a$ that is parsed by \Cref{AlgoBipOff}, such that $u,u''\in V^S$. By repeating the exact same arguments as in subcase (a), we can see that the second condition of \Cref{def:safe} will additionally be satisfied with probability at least $\nicefrac{1}{2}.$ Hence, we can see that the probability that $e_v$ is safe for $v$ is at least $\nicefrac{1}{4}$. 
\end{itemize}

\paragraph{Case 2: $e' = \{v',u\}$, where $v' \neq v$}
Using exactly the same arguments, the above analysis can be replicated, even if the first parsed edge after the time where $e_v$ is added to $\Eprime$ is incident only to vertex $u$, but not $v$.

\paragraph{Case 3: $e' = \{v,u\}$} In this case, $e'$ corresponds to an edge parallel to $e_v$ or to the edge $e_v$ itself. If $e'$ is a parallel edge to $e_v$, then, by an analysis similar to the first case, both conditions are satisfied with probability at least $\nicefrac{1}{2}$. On the other hand, if $e'$ coincides with $e_v$ (which is by definition $R$-used), then it can be verified that $e_v$ is already safe for $v$ (with probability $1$).
\end{proof}
Recall that $e_v$ is defined to be the edge of maximal reward among the edges in $\Eprime$ incident to vertex $v$ (if such an edge exists). We denote by $r_{e_v}$ the reward for this edge and set $r_{e_v} = 0$ if this edge does not exist. We relate the weight of $M$ and that of the sum of the $r_{e_v}$ in the following two lemmas.

\begin{lemma}\label{lem:matching:safectononsafe}
For any realization of the edge weights in $A$, we have
\begin{align*}
    \EXP{\sum_{v\in V} r_{e_v} \1{e_v\text{ is safe for }v}} 
    &\geq  \frac{1}{4}  \cdot \E{\sum_{v \in V} r_{e_v}}.
\end{align*}
\end{lemma}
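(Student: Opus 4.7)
The plan is to reduce the lemma to a per-vertex statement and apply \Cref{lem:matching:safeprob} edge by edge. By linearity of expectation it suffices to show, for every fixed vertex $v$, that
\[
\EXP{r_{e_v} \1{e_v \text{ is safe for } v}} \geq \frac{1}{4}\,\EXP{r_{e_v}},
\]
where the expectation is over the coin flips of \Cref{AlgoOff} (the values in $A$ are held fixed per the hypothesis, and tie-breaking is fixed as assumed). For each candidate edge $e$ incident to $v$ I would decompose both sides by conditioning on the event $\{e_v = e\}$:
\[
\EXP{r_{e_v}\1{\text{safe}}} = \sum_{e \ni v} \EXP{r_{e_v}\1{e_v = e}\1{\text{safe}}}, \qquad \EXP{r_{e_v}} = \sum_{e \ni v} \EXP{r_{e_v}\1{e_v = e}}.
\]

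The crux is to observe that on the event $\{e_v = e\}$ the reward $r_{e_v}$ is a deterministic function of $A$ and can be pulled out of the expectation. Indeed, $e_v = e$ forces $e \in \Eprime$, which in \Cref{AlgoOff} happens only if $e$ is marked $R$-used; and because the algorithm processes the values in $A$ in decreasing order, the first time $e$ is encountered the current value is the larger of its two draws. Denote this larger value by $\rho_e$; then on $\{e_v = e\}$ we have $r_{e_v} = \rho_e$. Hence
\[
\EXP{r_{e_v}\1{e_v = e}\1{\text{safe}}} = \rho_e \cdot \PRO{e_v = e,\ e_v \text{ safe for }v},
\]
and analogously $\EXP{r_{e_v}\1{e_v = e}} = \rho_e \cdot \PRO{e_v = e}$.

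Now I would apply \Cref{lem:matching:safeprob}: whenever $\PRO{e_v = e}>0$,
\[
\PRO{e_v = e,\ e_v \text{ safe for } v} = \PRO{e_v \text{ safe for } v \mid e_v = e}\cdot \PRO{e_v = e} \geq \tfrac{1}{4}\,\PRO{e_v = e}.
\]
Multiplying by $\rho_e \geq 0$ and summing over $e$ gives the per-vertex bound, and summing over $v$ yields the claim. The only real subtlety, and thus the main step to get right, is the identification of $r_{e_v}$ with the larger of the two sampled values of $e$ on $\{e_v = e\}$; without this, $r_{e_v}$ would be random under the coin flips and one could not factor it through the conditional probability bound supplied by the preceding lemma.
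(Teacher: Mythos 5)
Your proposal is correct and follows essentially the same route as the paper's proof: linearity over vertices, the observation that $e_v=e$ forces $e\in\Eprime$ and hence $r_{e_v}$ equals the larger of the two draws for $e$ (a deterministic function of $A$), and then the conditional bound from \cref{lem:matching:safeprob}. The "main subtlety" you flag is exactly the point the paper's proof makes as well.
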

\begin{proof}
Let us fix any realization of $A$ and assume w.l.o.g. that for each edge $e \in E$ with realizations $a_{e,1}$ and $a_{e,2}$, it holds $a_{e,1} > a_{e,2}$. 
We show that that for any vertex $v \in V$, it holds $\EXP{r_{e_v} \1{e_v\text{ is safe for }v}} \geq \frac{1}{4} \cdot \EXP{r_{e_v}}$. By linearity of expectations, this is enough to prove the Lemma.

It is not difficult to verify that for each edge $e \in E$ incident to vertex $v$,  
if $e$ is safe for $v$, then $r_e = a_{e,1}$, the larger weight realization.
Indeed, assuming that $r_e = a_{e,2}$, the value $a_{e,2}$ is parsed after $a_{e,1}$ by \Cref{AlgoOff} and thus it cannot be the case that $e \in \Eprime$. Given this fact, we have the first passage of the following display:
\begin{align*}
    \EXP{r_{e_v} \1{e_v\text{ is safe for }v}}  &= \sum_{e \in E} a_{e,1} \cdot \PRO{e_v\text{ is safe for }v\text{ and }e_v=e}\\
    &= \sum_{e \in E} a_{e,1} \cdot \PRO{e_v\text{ is safe for }v \mid e_v=e} \cdot \PRO{e_v=e}\\
    &\ge \frac 14 \sum_{e \in E} a_{e,1} \PRO{e_v=e}\\
    &=\frac{1}{4} \cdot \EXP{r_{e_v}},
\end{align*}
where the inequality follows from  \Cref{lem:matching:safeprob}.
\end{proof}

\begin{lemma} \label{lem:matching:safecollect}
In any run of \Cref{AlgoOff}, for the expected reward collected in $M$, we have
\[
\mathbb{E}\left[ w(M) \right]
    \geq 
    \frac{1}{2} \cdot \mathbb{E}\left[\sum_{v\in V} r_{e_v} \1{\text{$e_v$ is safe for $v$}}\right].
\]
\end{lemma}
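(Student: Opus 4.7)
The plan is to establish the stronger pointwise inequality
\[
2\, w(M) \;\ge\; \sum_{v \in V} r_{e_v} \1{e_v \text{ is safe for } v}
\]
for every realization of $A$ and of the coin flips in \cref{AlgoOff}; taking expectations then yields the lemma. The argument I have in mind is a charging map $\phi$ that sends each vertex $v$ with $e_v$ safe for $v$ to a ``witness'' edge $\phi(v) \in M$ of weight at least $r_{e_v}$, together with the bound $|\phi^{-1}(e^*)| \le 2$ for every $e^* \in M$.

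First, I would construct $\phi$ case-wise. If $e_v \in M$, set $\phi(v) := e_v$, which trivially satisfies $w(\phi(v)) = r_{e_v}$. Otherwise, Condition~1 of \cref{def:safe} rules out any other $E'$-edge incident to $v$, so $v$ is certainly free when $e_v$ is processed in the arrival order; the only way $e_v$ can fail to enter $M$ is that its other endpoint $u$ has already been matched by some $e^* \in M$ that was processed earlier. Since $M \subseteq E'$ and $e^*$ is incident to $u$, Condition~2, together with the strict tie-broken order on distinct draws, forces $w(e^*) = r_{e^*} \ge r_{e_v}$, so $\phi(v) := e^*$ meets the requirement.

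The step I expect to require the most care is controlling $|\phi^{-1}(e^*)|$. Fixing $e^* = \{x,y\} \in M$, I would partition its preimage by which endpoint of $e^*$ a charge ``goes through'': a vertex $v$ is of Type~A at $z \in \{x,y\}$ if $v = z$ and $e_z = e^*$, and of Type~B at $z$ if $z$ is the other endpoint of $e_v$ while $e_v \notin M$. The plan is to show that each endpoint $z$ contributes at most one preimage. For Type~B, two distinct candidates $v_1 \ne v_2$ would produce two distinct $E'$-edges $\{z,v_1\}$ and $\{z,v_2\}$ at $z$, both safe, and applying Condition~2 to each would simultaneously require $r_{e_{v_1}} \ge r_{e_{v_2}}$ and $r_{e_{v_2}} \ge r_{e_{v_1}}$, which is impossible under the strict tie-broken ordering on distinct draws. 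For Type~A, the hypothesis that $e_z = e^*$ is safe for $z$ already forces, via Condition~1, that $e^*$ be the \emph{unique} $E'$-edge at $z$, which rules out any Type~B preimage through the same $z$. Hence at most one vertex per endpoint charges to $e^*$, each contributing at most $r_{e^*} = w(e^*)$, giving $\sum_{v \in \phi^{-1}(e^*)} r_{e_v} \le 2\, w(e^*)$. Summing this over $e^* \in M$ yields the pointwise bound, and linearity of expectation concludes the proof.
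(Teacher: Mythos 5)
Your proof is correct and follows essentially the same route as the paper's: the paper partitions $M$ into edges safe for both endpoints versus the rest and argues that each collected edge can account for at most two safe edges (one per endpoint, each of weight at most that of the collected edge), which is exactly your charging map with $|\phi^{-1}(e^*)|\le 2$. Your Type~A/Type~B bookkeeping is just a more explicit write-up of the paper's per-endpoint uniqueness argument, and both rest on the same two facts: Condition~1 keeps $v$ free until $e_v$ arrives, and Condition~2 forces any blocking edge at the other endpoint to be heavier.
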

\begin{proof}
We first partition the set $M \subseteq \Eprime$ into two sets $M^B$ and $M'$. Informally, $M^B$ is the subset of edges $e = \{u,v\}$ in $M$ that are safe for both endpoints $u$ and $v$, whereas $M'$ contains the rest of the edges in $M$. Formally: 
\[
M^B = \{e = \{u,v\} \in M~\mid~ e \text{ is safe for }u \text{ and for } v\} ~~\text{   and   }~~~ M' = M \setminus M^B.
\]
Similarly, we denote by $V^B$ the subset of vertices that are adjacent to some edge in $M^B$, and by $V' = V \setminus V^B$ the rest of the vertices. 

By definition of safe edges, each edge $e = \{u,v\}$ in $M^B$ is the only edge that can be collected by both endpoints $u$ and $v$ in the second phase of \Cref{AlgoOff}. Further, since the weight of such an edge is contributes to both its endpoints, it follows directly that 
\[
w(M^B) 
=
\frac{1}{2} \sum_{v \in V^B} r_{e_v} \1{e_v\text{ is safe for } v}.
\]
Turning now our attention to the set $M'$, consider any vertex $v \in V'$ that is incident to some edge $e_v = \{v,u\}$ that is safe for $v$
but not for $u$.
By definition of the safe edges, by the time this edge arrives in the second phase of \Cref{AlgoOff}, it is either collected by the algorithm, or another edge $e' = \{u,u'\}$, incident to $u$, has already been collected with reward $r_{e'}$. However, since $e_v$ is safe for $v$, it has to be that $r_{e'} \ge r_{e_v}$, since no edge in $\Eprime$ incident to $u$ has smaller reward than $e_v$. Note that $e' = \{u,u'\}$ cannot be safe for $u$, given that in the above scenario 
$\Eprime$ contains at least \emph{two} edges incident to $u$, which violates \Cref{def:safe}.
Further, $e'$ cannot be safe for $u'$ either, since $\Eprime$ contains $e_v$, which is incident to $u$ and has reward \emph{smaller} than $r_{e'}$, violating the second condition of \Cref{def:safe}.
Finally, adding any edge $e$ to $M'$ can make at most two safe edges infeasible: either the added edge $e$ was safe itself, or after adding $e$, we can no longer include any safe edges incident to its endpoints, each of which has at most one safe, incident edge by definition. We can conclude that
\[
w(M') \geq \frac{1}{2} \sum_{v \in V'} r_{e_v} \1{e_v\text{ is safe for } v}.
\]
By combining the above expressions, and since $M^B$ and $M'$ (resp., $V^B$ and $V'$) partition $M$ (resp., $V$), we get
\[
    w(M) = w(M^B) + w(M') \geq \frac{1}{2} \sum_{v \in V} r_{e_v} \1{e_v\text{ is safe for } v}.  
\]
The proof of the lemma follows simply by taking the expectation. 
\end{proof}
Now that the preparatory lemmas have been proved, we can present our main result of the Section.

\begin{theorem}
\label[theorem]{thm:edgeGen}
    For the problem of finding a maximum-weight matching in a general graph $\mathcal{G}$, in the online edge arrival model, \Cref{AlgoOn} is $16$-competitive in expectation, i.e. $ 16 \cdot \E{w(\M)} \ge \E{\opt},$ where $\opt$ is the weight of an optimal matching in $\mathcal{G}$.
\end{theorem}
\begin{proof}
We define for any $v \in V$ and any any \emph{weighted} edge set $\hat{E}\subseteq E$ the quantity $\val{\hat{E}}{v}$, that corresponds to the largest weight of an edge incident to vertex $v$ in edge set $\hat{E}$, or $0$ if there is no such edge. Fix any vertex $v \in V$ and consider the first time in the run of \Cref{AlgoOff}, where an edge $e = \{v,u\}$ of value $a$ arrives which is incident to $v$ and, at that time, $u \in V^S$. Note that in the case where $u \notin V^S$, the edge cannot be added to neither $\Eprime$ nor $\M_s$. By construction of the offline algorithm and since $v \in V^S$, with probability $\nicefrac{1}{2}$ (i.e., the probability of the fair coin-flip), edge $e$ is marked as $R$-used and it is added in $E'$. 
 At this point during the run of \Cref{AlgoOff}, no matter the outcome of the fair coin flip, at most one edge incident to $v$ can be added to $\Ms$ -- either $e$ with weight $a$, or some other edge $e'$ of weight \emph{smaller} than $a$.
Thus, for any $v \in V$, we have that $\mathbb{E}\left[\valEprime{v}\right]  
\geq \frac{1}{2} \cdot \mathbb{E}\left[\valMs{v}\right]$, which, by linearity of expectation implies that

\[
    \EXP{\sum_{v\in V} r_{e_v}}
    = \mathbb{E}\left[\sum_{v\in V} \valEprime{v}\right]
    \geq \frac{1}{2} \cdot \mathbb{E}\left[ \sum_{v\in V}\valMs{v}\right]
    = \mathbb{E}\left[w(M_S) \right].
\]
By combining \Cref{lem:matching:safectononsafe,lem:matching:safecollect}, for the expected reward collected in $\M$

\begin{align*}
    \mathbb{E}\left[ w(M) \right]
    & \geq 
    \frac 12 \cdot \E{\sum_{v\in V} r_{e_v} \1{e_v\text{ is safe for }v}} \\
    &\ge \frac 18 \cdot \E{\sum_{v \in V} r_{e_v}} \ge \frac{1}{8} \cdot \E{w(\Ms)}
    \ge \frac{1}{16} \cdot \E{\opt},      
\end{align*}
where in the last inequality we use the fact that $w(\Ms)$ has the same distribution as the value of the greedy solution with respect to the rewards, which in turn is a $2$-approximation of the optimal matching.
\end{proof}

\section{Vertex Arrival in Bipartite Graphs} \label{sec:bipartite}

We now consider the case of \sspi{} matching on bipartite graphs with vertex arrivals. More specifically, we consider a bipartite graph $\mathcal{G} = (B \cup I, E)$, where $B$ is the set of left vertices, which we refer to as ``buyers'', and $I$ is the set of right vertices, which we refer to as ``items''. Every edge $e = \{b, i\}$ between buyer $b \in B$ and item $i \in I$ is associated with an independent distribution $D_e$. In the offline phase, the gambler receives a single sample $S$ from the product distribution of edge weights. In the online phase, buyers arrive sequentially and in adversarial order, while the gambler observes simultaneously the rewards of all edges incident to the arriving buyer. At each time, the gambler decides irrevocably whether to match the buyer to some unmatched item, or skip. The objective is to maximize the total expected reward collected by the buyers, against that of a prophet who can compute a max-weight matching w.r.t. the rewards $R$.

In this section, we first provide a $\O(1)$-competitive SSPI for the above model. Then, we show how our algorithm can be translated into a truthful mechanism where buyers are incentivized to report the true reward for each item. Finally, we show how our analysis extends to the case where the weights of all the edges incident to the same buyer are identical (i.e., buyer-dependent), which allows us to provide an \sspi{} of improved competitive guarantee for the case of transversal matroids.

\subsection{Main algorithm and analysis} \label{sec:bipartitemain}
    
\Cref{AlgoBipOn} for the bipartite graph model operates in a similar manner as that for general graphs of \Cref{sec:generalMatching}. As before, the sample $S$ is used to calculate prices on both sides: items (fixed side vertices) have prices to \emph{protect} them from being sold too cheaply, while buyers (incoming vertices) have prices they need to beat in order to participate in the market, in addition to beating the item prices. We denote by $M_S$ the greedy matching on the $S$ graph, while by $\Eplus$ the set of edges of maximum reward (greater than their thresholds) for each arriving buyer $b\in B$. Let $M$ be the matching returned by the algorithm on the edges of $\Eplus$, where conflicts are solved in favor of the first arriving edge. Note that $\M$ is computed in an online, adversarial ordering.

\begin{algorithm2e}[!t]
\DontPrintSemicolon
    Set $\Eplus=\emptyset$, $M=\emptyset$\;
    Compute a greedy matching $\Ms$ on the graph with edge weights $S=\{s_e \mid e\in E\}$\;
    \For{each $e = \{b, i\} \in \Ms$}{
        Set $p_{i}=p_{b}=s_e$
        }
    \For{each vertex $k$ not matched in $\Ms$}{
        Set $p_{k}=0$.
        }  
    \For{each arriving buyer $b \in B$}{
        Let $\hat e = \{b, i^*\} = \argmax \{ r_e \mid \ e = \{b, i\}\text{ and } r_e \geq \max\{p_{b},p_i\}\}$\;
        Add $\hat e$ to $\Eplus$ with weight $r_{\hat{e}}$     \tcp{Only used for the analysis.}
        \If{$i^*$ is not matched in $\M$}{
            Add $\{b, i^*\}$ to $\M$ with weight $r_{\hat e}$
        }
    }
    \Return $\M$
\caption{Bipartite Prophet Matching with Vertex-Arrivals}
\label{AlgoBipOn}
\end{algorithm2e}

\paragraph{Equivalent offline algorithm} In order to relate the weight of $\Eplus$ and $\M$ with the prophet's expected reward, similarly to the previous section, we consider an offline version of the algorithm (\Cref{AlgoBipOff}), which exhibits the same distribution over the relevant sets, and can be analyzed more easily. Indeed, this offline algorithm interleaves the building of a greedy matching $\Ms$ with that of the allocation $\Eplus$. While all the values are drawn in advance, they are assigned to $S$ and $R$ only upon arrival. We establish the following equivalence:

\begin{claim}
\label{claim:equivalenceBipartite}
    The sets $\Eplus$, $\Ms$ and $\M$ are distributed the same way when computed by \Cref{AlgoBipOn} as when computed by \Cref{AlgoBipOff}.
\end{claim}
\begin{proof}
The sets $R$ and $S$ in the offline and online setting follow the same distribution as argued in \Cref{claim:equivalence}, and $\Ms$ is still the greedy matching with respect to the samples $S$, in both \Cref{AlgoBipOn} and \Cref{AlgoBipOff}. We know that $\M$ is extracted in the same way by $\Eplus$ by the two algorithms, so we just need to argue that for each fixed realization of the edge weights, the online and offline versions of the edge sets $\Eplus$ coincide.
    
    In \Cref{AlgoBipOn}, $\Eplus$ contains \emph{at most} one edge $e=\{b,i\}$ for each buyer $b\in B$. This is the edge of \emph{largest} reward that is above both prices $p_b$ and $p_i$, if such an edge exists.
    Now, in \Cref{AlgoBipOff}, observe that, once an edge $e=\{b,i\}$ is added to $\Eplus$, the buyer $b$ is removed from the set $B^R$. Therefore, in this algorithm also, at most one edge can be added to $\Eplus$ for each buyer. Now, observe that if the edge is added to $\Eplus$, then it must be over the price of $b$ and $i$.
    Indeed, this follows because, in order for $e$ to be added to $\Eplus$, it must be the case that $b\in B^R$ (and, by construction, also in $B^S$) and $i\in I^S$,
    which implies that $b$ and $i$ have not yet been matched in the greedy sample solution, and thus, because of the greedy traversal order, its reward must be above both prices.
    Finally, we observe that, again due to the greedy traversal order, if $e$ is added to $\Eplus$, then it must be the \emph{largest}-reward edge which is above the prices $p_b$ and $p_i$.
    Hence $\Eplus$ follows the same distribution in both algorithms.
\end{proof}

\paragraph{Correctness and competitive analysis}
The correctness of \Cref{AlgoBipOn} is guaranteed since only edges in $\Eplus$ can be added in $\M$, $\Eplus$ contains at most one edge for each buyer, and the algorithm never assigns an item to more than one buyers. Hence, the set of collected edges in $M$ is a valid matching. Furthermore, we show now that \Cref{AlgoBipOn} is $8$-competitive. As in \Cref{sec:generalMatching}, the main idea of the proof is to relate the weight of the optimal offline matching, $M_S$, with the matching collected by our algorithm, $M$, using a carefully chosen subset of $\Eplus$, the \emph{safe} edges.

\begin{definition}[Safe edges for bipartite graphs]\label{def:safeBipartite}
    We call an edge $e=\{b,i\}\in E$ \emph{safe for buyer $b$} if the following conditions are true:
    \begin{itemize}
        \item[(i)] $e$ is the only edge in $\Eplus$ incident to buyer $b$.
        \item[(ii)] No edge in $\Eplus$ incident to item $i$ has smaller weight than $r_e$.
    \end{itemize}
\end{definition}

As in \Cref{sec:generalMatching}, we denote $e_b \in \Eplus$ the edge incident to buyer $b$, among those in $\Eplus$, if such an edge exists, in a run of \Cref{AlgoBipOff}. Additionally, we fix a realization of $A$, $b\in B$, and $e \in E$ for which the probability that $e_b = e$ is non-zero.

\begin{algorithm2e}[!t]
\DontPrintSemicolon
    Set $\Eplus=\emptyset$, $\Ms= \emptyset$, $\M=\emptyset$, $B^S=B^R=B$, and $I^S=I$\;
    For each $e\in E$, draw from $D_e$ two values $a_{e,1}$ and $a_{e,2}$\;
    Order $A=\{a_{e,1}, a_{e,2} \mid e\in E\}$ in a decreasing fashion \;
    \For{each value $a\in A$ in the above order}{
    \uIf{$a$ corresponds to edge $e = \{b,i\}$ that has never been observed before}{
            Flip a fair coin\;
            \uIf{Heads}{
                Mark $e$ as $R$-used\;
                \uIf{$b\in B^R$ and $i\in I^S$}{
                    Add $e$ to $\Eplus$ with weight $a$, remove $b$ from $B^R$\;
                }
            }
            \uElse{
                Mark $e$ as $S$-used\;
                \uIf{$b\in B^S$ and $i\in I^S$}{
                    Add $e$ to $\Ms$ with weight $a$, remove $b$ from $B^S$ and $B^R$, and $i$ from $I^S$\;
                }
            }
        }
        \uElseIf{$a$ corresponds to an edge $e=\{b,i\}$ which is $R$-used, $b\in B^S$ and $i\in I^S$}{
            Add $e$ to $\Ms$ with weight $a$, remove $b$ from $B^S$ and $i$ from $I^S$\; 
        }
    }
    \For{each $e$ in the same order as \Cref{AlgoBipOn}}{
        \If{$e \in \Eplus$ and $\{e\} \cup \M$ is a matching }{ 
            Add $e$ to $\M$ with weight $r_e$\;
        }
    }
    \Return $\M$.
\caption{Offline Simulation for Bipartite Matching with Vertex-Arrivals}
\label{AlgoBipOff}
\end{algorithm2e}

\begin{lemma}\label{lem:lowerBoundProbBipartite}
    For any buyer $b \in B$ and edge $e$ incident to $b$ for which the probability that $e_b = e$ is non-zero, we have that $ \PRO{\text{$e_b$ is safe for $b$} \mid e_b = e} \geq \nicefrac{1}{2}$.
\end{lemma}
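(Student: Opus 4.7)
The key observation is that, in this bipartite vertex-arrival setting, condition~1 of \cref{def:safeBipartite} is automatic. By the structure of \cref{AlgoBipOff}, a buyer $b$ is removed from $B^R$ precisely when some edge incident to $b$ is added to $\Eplus$, so $\Eplus$ contains at most one edge per buyer. Conditional on $e_b = e$, that edge must be $e$ itself, so condition~1 holds with probability~$1$, and it suffices to bound $\PRO{\text{condition 2 fails} \mid e_b = e}$ by $1/2$.

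As a preparatory step, I would establish the invariant $B^R \subseteq B^S$ throughout any execution of \cref{AlgoBipOff}. Initially $B^R = B^S = B$. Among the three removal events, a first-observation Heads deletes $b$ from $B^R$ only, a first-observation Tails deletes $b$ from both, and the second-observation of an $R$-used edge deletes $b$ from $B^S$ only. Only the last case is delicate; a short case analysis on the state at the first observation that marked the edge $R$-used shows that whenever the second-observation branch actually fires, $b$ is already outside $B^R$: either the first-observation Heads had already added the edge to $\Eplus$ and removed $b$ from $B^R$, or $b$ was already outside $B^R$; the remaining sub-case ($b \in B^R$ but $i \notin I^S$ at the first observation) is ruled out because it would force $i \notin I^S$ at the second observation as well, violating the precondition of the branch.

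The main probabilistic argument goes as follows. Enumerate as $e'_1, e'_2, \ldots$ the edges incident to $i$, distinct from $e$, whose first observation is processed after $e$'s, in processing order. From the offline algorithm, $e'_k \in \Eplus$ requires (I) Heads at the first observation of $e'_k$, (II) $b'_k \in B^R$ at time $t_{e'_k}$, and (III) $i \in I^S$ at time $t_{e'_k}$. Define $K$ to be the smallest index $k$ for which both (II) and (III) hold; if no such $K$ exists then every $e'_k$ fails (II) or (III), so condition~2 holds. Otherwise, by the invariant $b'_K \in B^S$ as well, and since the event $\{K=k\}$ is determined entirely by processing steps strictly before $t_{e'_k}$, the coin flip at $e'_K$'s first observation is independent of it and uniform. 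With probability $1/2$ (Heads), $e'_K$ is added to $\Eplus$ and condition~2 fails. With probability $1/2$ (Tails), the conjunction $b'_K \in B^S$ and $i \in I^S$ guarantees that $e'_K$ is added to $\Ms$, removing $i$ from $I^S$; every subsequent $e'_\ell$ then fails (III), while every prior $e'_j$ already failed (II) or (III) by minimality of $K$. This yields $\PRO{e_b \text{ safe for } b \mid e_b = e} \geq 1 - \tfrac{1}{2}\PRO{K \text{ exists} \mid e_b = e} \geq \tfrac{1}{2}$.

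I expect the main obstacle to be the invariant $B^R \subseteq B^S$: without it, a candidate $e'_K$ with $b'_K \in B^R \setminus B^S$ would have a bad Heads outcome but an ineffective Tails outcome (since $e'_K$ would fail to enter $\Ms$), breaking the clean $1/2$-$1/2$ trade-off at step $K$. The remaining pieces (selecting $K$, isolating the independent coin flip, and checking that earlier $e'_j$ cannot enter $\Eplus$) parallel the template of \cref{lem:matching:safeprob}, but simplify substantially because of the one-edge-per-buyer property of $\Eplus$ that already handed us condition~1 for free.
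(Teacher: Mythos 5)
Your proof is correct and follows essentially the same route as the paper's: identify the first edge incident to $i$ processed after $e$ that can still affect $E^+$ or $M_S$, and observe that the fair coin flip there resolves in favor of safety with probability at least $1/2$ (or deterministically, when that edge is a second observation of an $R$-used edge). The only real difference is that you explicitly state and prove the invariant $B^R \subseteq B^S$, which the paper uses implicitly when it asserts that an edge with $b' \notin B^S$ cannot enter $E^+$; making that step explicit is a small but genuine gain in rigor, not a different approach.
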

\begin{proof}
As in the proof of \Cref{lem:matching:safeprob}, we note that an edge is marked $R$-used if its realized reward is greater than the corresponding sample, and otherwise is marked as $S$-used. By construction, \Cref{AlgoBipOff} only adds $R$-used edges to $\Eplus$. Let us consider the time in the greedy ordering when $e_b = e = \{b,i\}$ is added to set $\Eplus$. By definition of $e_b$, this is the \emph{first} time an edge incident to buyer $b$ is added to $\Eplus.$ Additionally, by construction of the algorithm, this must be the \emph{only} edge corresponding to buyer $b$ which is added to $\Eplus$. Thus, the first condition of \Cref{def:safeBipartite} is satisfied by construction.

Now, let us consider the next time in the greedy ordering when an edge $e'=\{b',i\}$ with weight $a'$ is parsed by the algorithm, $b'\in B^S$, and $i\in I^S$ (indeed, if $b'\not\in B^S$ or $i\not\in I^S$, then $e'$ could not be added to either $\Eplus$ or $\Ms$). If this is the second time that $e'$ has been processed by the algorithm, then it must be the case that $e'$ is $R$-used (since, otherwise, we get a contradiction to the choice of $e_b$), and hence, $a'$ is a sample value. At this time, $b'$ is removed from $B^S$ and $i$ from $I^S$, so no other edges incident to item $i$ can be added to $\Eplus$ after this point. Thus, in this case, both conditions of \Cref{def:safeBipartite} are satisfied.
Now, in the case when $e'$ is being processed for the \emph{first} time, we have that, with probability $\nicefrac{1}{2}$, $e'$ is $S$-used, thus guaranteeing that, by similar arguments as above, both conditions of \Cref{def:safeBipartite} are satisfied.
\end{proof}
In the following two lemmas, we relate the expected weight of the matching $M$ with a constant fraction of the expected sum of the $r_{e_b}.$

\begin{lemma}\label{lem:safeRewardBipartite}
    For any fixed realization of edge weights $A$, we have 
    \begin{align*}
        \EXP{\sum_{b\in B} r_{e_b} \1{\text{$e_b$ is safe for $b$}}}
        \geq
        \frac{1}{2} \cdot \EXP{\sum_{b\in B} r_{e_b}}.
    \end{align*}
\end{lemma}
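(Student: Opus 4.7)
The argument will mirror the proof of \cref{lem:matching:safectononsafe}, using \cref{lem:lowerBoundProbBipartite} as the key probabilistic ingredient in place of \cref{lem:matching:safeprob}. First I would fix an arbitrary realization of the value vector $A = \{a_{e,1},a_{e,2} : e \in E\}$ and, without loss of generality, relabel so that $a_{e,1} > a_{e,2}$ for every edge $e$. By linearity of expectation, it then suffices to prove, for every buyer $b \in B$, the single-buyer inequality
\[
\EXP{r_{e_b}\,\1{e_b\text{ is safe for }b}} \;\geq\; \tfrac{1}{2}\,\EXP{r_{e_b}},
\]
where the expectation is over the coin flips that assign the two realizations of each edge to $S$ or $R$.

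The central observation I would make is that whenever an edge $e$ belongs to $\Eplus$, it must have been marked $R$-used at the moment its \emph{larger} realization was parsed by \cref{AlgoBipOff}; indeed, inspection of the offline procedure shows that $e$ is added to $\Eplus$ only in the ``Heads'' branch taken the first time $e$ is encountered, which is exactly when its maximum realization is declared to be the reward $r_e$. Consequently, conditional on the event $\{e_b = e\}$, we have the deterministic identity $r_e = a_{e,1}$. This lets me rewrite, for the fixed realization of $A$,
\[
\EXP{r_{e_b}\,\1{e_b\text{ is safe for }b}} \;=\; \sum_{e\ni b} a_{e,1}\,\PRO{e_b = e \text{ and } e_b \text{ is safe for }b},
\]
and analogously $\EXP{r_{e_b}} = \sum_{e\ni b} a_{e,1}\,\PRO{e_b = e}$ where both sums range over edges incident to $b$ (with non-zero probability that $e_b = e$).

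Next I would apply \cref{lem:lowerBoundProbBipartite}: for any such $e$, $\PRO{e_b\text{ is safe for }b \mid e_b = e} \geq \tfrac{1}{2}$, hence
\[
\PRO{e_b = e \text{ and } e_b \text{ is safe for }b} \;\geq\; \tfrac{1}{2}\,\PRO{e_b = e}.
\]
Plugging this bound into the decomposition above yields $\EXP{r_{e_b}\,\1{e_b\text{ is safe for }b}} \geq \tfrac{1}{2}\,\EXP{r_{e_b}}$ for every buyer $b$. Summing over $b \in B$ and invoking linearity of expectation produces the desired inequality.

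\textbf{Main obstacle.} There is no real difficulty once the algebra is set up: all the combinatorial work has been absorbed into \cref{lem:lowerBoundProbBipartite}. The only subtle point is justifying the replacement $r_e = a_{e,1}$ on the event $\{e_b = e\}$, which relies precisely on the fact that edges enter $\Eplus$ only via the ``Heads'' branch of the offline algorithm; this must be stated explicitly so that the rewriting of both expectations in terms of the $a_{e,1}$'s is legitimate.
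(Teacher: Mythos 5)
Your proof is correct and follows essentially the same route as the paper, which itself simply points to the argument of \cref{lem:matching:safectononsafe} with \cref{lem:lowerBoundProbBipartite} substituted for \cref{lem:matching:safeprob}: fix $A$, reduce to a per-buyer inequality by linearity, observe that any edge in $\Eplus$ enters via the ``Heads'' branch on its first (larger) realization so that $r_e = a_{e,1}$ on $\{e_b = e\}$, and apply the conditional probability bound. Your explicit justification of the identity $r_e = a_{e,1}$ is exactly the subtle point the paper's original argument also relies on.
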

\begin{proof}
    The proof follows by the exact same arguments presented in \Cref{lem:matching:safectononsafe}, replacing the application of \Cref{lem:matching:safeprob} with \Cref{lem:lowerBoundProbBipartite}.
\end{proof}

\begin{lemma}\label{lem:algRewardBipartite}
    In any run of \Cref{AlgoBipOff}, for the expected reward collected in $\M$, we have
    \begin{align*}
        \EXP{w(\M)} 
        \geq
        \EXP{\sum_{b\in B} r_{e_b} \1{\text{$e_b$ is safe for $b$}}}.
    \end{align*}
\end{lemma}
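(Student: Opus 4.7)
The plan is to prove the inequality pointwise—for every outcome of the coin flips, weight draws, and tie-breaking in \cref{AlgoBipOff}, I will show that $w(\M) \geq \sum_{b \in B} r_{e_b} \1{e_b \text{ is safe for } b}$—so that taking expectations immediately yields the lemma. The strategy is a charging argument: each buyer $b$ whose $e_b = \{b, i\}$ is safe is charged to the item $i$, and I want to show (a) item $i$ is necessarily matched in $\M$ by an edge of weight at least $r_{e_b}$, and (b) distinct safe buyers charge distinct items.

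The first step uses the online description in \cref{AlgoBipOn}. When buyer $b$ arrives, the algorithm picks the highest-reward edge from $b$ that clears both thresholds, which must equal $e_b$ since $\Eplus$ contains at most one edge per buyer. If item $i$ is still unmatched at that moment, $\{b, i\}$ is added to $\M$ and the $\M$-edge incident to $i$ has weight exactly $r_{e_b}$. Otherwise, $i$ had already been matched by some earlier buyer $b'$ through an edge $\{b', i\} \in \Eplus$ placed into $\M$; the second condition in \cref{def:safeBipartite} applied to $e_b$ forbids any edge in $\Eplus$ incident to $i$ of smaller weight, so $r_{\{b', i\}} \geq r_{e_b}$. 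In either case, the charge is covered by the $\M$-edge at item $i$.

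For uniqueness, I will invoke the total order on realized values induced by the random tie-breaking permutation fixed at the start of the algorithm, so that ``smaller weight'' in \cref{def:safeBipartite} is read as strictly smaller in that order. If two distinct buyers $b_1, b_2$ both had safe edges to the same item $i$, then $\{b_1, i\}, \{b_2, i\} \in \Eplus$ and, say, $r_{\{b_1, i\}} < r_{\{b_2, i\}}$ in the total order; but then $\{b_1, i\}$ directly violates safety condition $(2)$ for $e_{b_2}$. This makes the map ``safe buyer $b \mapsto$ item $i$'' injective into items matched in $\M$, with each charge bounded by the weight of the incident $\M$-edge, yielding $\sum_{b \in B} r_{e_b} \1{e_b \text{ is safe for } b} \leq w(\M)$ pointwise. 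The only mildly delicate point is this uniqueness step—it hinges on reading the safety condition strictly via the tie-breaking total order—after which the charging produces the stated bound with no further slack, in contrast to the factor $\tfrac{1}{2}$ loss in the general-matching analogue \cref{lem:matching:safecollect} (where an edge has two contended endpoints rather than just the item side).
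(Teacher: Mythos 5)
Your proposal is correct and follows essentially the same route as the paper: both prove the inequality pointwise by showing that each safe edge $e_b=\{b,i\}$ is either collected or blocked at item $i$ by a strictly heavier $\Eplus$-edge, and that no item can host two safe edges (your injectivity step is the paper's observation that the blocking edge cannot itself be safe and that each buyer precludes at most one safe edge). No gaps.
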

\begin{proof}
    Consider any fixed buyer $b$, and suppose that the edge $e_b = \{b,i\}$ is safe for $b$. We prove that either \Cref{AlgoBipOn} collects $r_{e_b}$, or there exists some other buyer $b'$ and edge $e'=\{b',i\}$ collected by \Cref{AlgoBipOn} such that $(i)$ $e'=e_{b'}$, $(ii)$ $r_{e'} > r_{e_b}$, and $(iii)$ $e_{b'}$ \emph{cannot} be safe for buyer $b'$. The above three conditions, combined with the fact that, by definition, \Cref{AlgoBipOn} only collects edges in $\Eplus$, and that each buyer can preclude the collection of at most one safe edge, imply the following stronger version of the claimed inequality:
    \begin{align*}
        w(M) \geq \sum_{b\in B} r_{e_b}\1{\text{$e_b$ is safe for $b$}}.
    \end{align*}
    To prove $(i)$, we simply note that, by definition of \Cref{AlgoBipOn}, for each buyer $b'$, either $e_{b'}$ is collected, or \emph{no} edge adjacent to $b'$ is collected.
    Claim $(ii)$ follows by \Cref{def:safeBipartite} and the fact that only edges in $\Eplus$ can be collected by \Cref{AlgoBipOn}. 
    To establish $(iii)$, observe that, by \Cref{def:safeBipartite}, edges $e'$ and $e_b$ cannot be simultaneously safe, as they are incident to the same item $i$.
\end{proof}
We have all the ingredients to prove the main result for bipartite graphs. 

\begin{theorem}\label[theorem]{thm:bipartite}
For the problem of finding a maximum-weight matching in a bipartite graph $\mathcal{G}$ in the online vertex arrival model, \Cref{AlgoBipOn} is $8$-competitive in expectation, i.e., $ 8 \cdot \EXP{w(M)} \geq \EXP{\opt},$ where $\opt$ is the weight of an optimal matching in $\mathcal{G}.$
\end{theorem}
\begin{proof}
    Following the exact same arguments as in \Cref{thm:edgeGen}, $\EXP{w(\Eplus)} \geq \nicefrac{1}{2} \cdot \EXP{w(\Ms)} \geq \nicefrac{1}{4} \cdot \EXP{\opt}$. Finally, combining \Cref{lem:safeRewardBipartite,lem:algRewardBipartite}, we get:
    \begin{align*}
        \EXP{w(M)}
        &\geq \EXP{\sum_{b\in B} r_{e_b} \1{\text{$e_b$ is safe for $b$}}}
        \\
        &\geq \frac{1}{2} \cdot \EXP{\sum_{b\in B} r_{e_b}}= \frac{1}{2} \cdot \EXP{w(\Eplus)}
        \geq \frac{1}{8} \cdot \EXP{\opt}.
    \end{align*}
\end{proof}

\subsection{Truthful Bipartite Matching} \label{sec:truthful}

For the same model as above, with buyers on the left side of $\mathcal{G}$ arriving online and items on the right side available offline, we design an incentive-compatible version of our prophet inequality.
The difficulty is that our $8$-approximation in \Cref{thm:bipartite} relies on each buyer choosing the \emph{largest} incident edge that beats an item's price. 
However, this item need not correspond to the item of highest \emph{utility} to the buyer. Indeed, the price for this item might be much closer to the buyer's value than for other items. In this case, a buyer might prefer to purchase an item which she values significantly less, in order to maximize her utility. For this reason, we cannot expect to show \emph{any} competitive guarantee for the utility of the returned solution.

Clearly, a different approach is needed. We use a pricing-based algorithm that does not rely on restricting the buyer's choice of item (among those with feasible prices that are still available). 
Instead, we ensure a sufficient weight for the resulting matching solely via appropriate thresholds for each buyer and item: this can be seen as doing a worst-case analysis over all threshold-feasible edges instead of choosing a good subset of them, explaining the incurred loss in the approximation. This allows us to make another small change and, among all threshold-feasible items, always assign the \emph{utility-maximizing} one.
In turn, the routine becomes truthful, and we have to mainly care about showing the approximation. Our strategy for setting the aforementioned thresholds can be interpreted as follows.
Assume we take as a starting point not our prophet inequality for the bipartite case, but the one for general graphs with edge arrivals. 
Obviously, arrival orders for the bipartite one-sided model we consider here are highly restricted compared to general edge-arrival. However, assuming the graph is bipartite, each arriving edge consists of exactly one buyer $b$ and one item $i$. Then, \Cref{AlgoOn} assigns prices $p_b$ and $p_i$, where $p_b$ corresponds to the aforementioned buyer's threshold we are planning to use.

Our mechanism works as follows: we determine greedy prices on the sample graph for all vertices just as described above. When a buyer $b$ arrives, we present her not with the greedy price of each item, but instead charge for each item $i$ the maximum over $p_i$ and $p_b$. Then, we let the buyer choose what she likes best, i.e. we assign a utility-maximizing item. Here, $M_S$ is again a greedy matching on the sample graph.
As before (we omit this part of the analysis because it is analogous), one can imagine an offline version of \Cref{AlgoT} that simply draws two values for each edge, and then goes through their set $A$ in non-increasing order, deciding for each value considered whether it belongs to $R$ or $S$ (if this was not yet decided by the twin value for the same edge). We are able to prove the following result for \Cref{AlgoT}: 

\begin{algorithm2e}[!t]
\caption{Truthful Bipartite Prophet Matching}
\label{AlgoT}
\DontPrintSemicolon
Set $M=\emptyset$\;
Compute the greedy matching $\Ms$ on the graph with edge weights according to sample $S$ 
\For{all $e = \{b,i\} \in \Ms$}{
    Set $p_{b}=p_{i}=s_e$
    }
\For{all vertices $k$ not matched in $\Ms$}{
    Set $p_{k}=0$
    }  
\For{each arriving buyer $b \in B$}{
        Set $F(b) = \{e = \{b,i\} \mid r_e \geq \max\{p_{b},p_i\} \text{ and } i \text{ is not matched in } M\}$\;
        Set $\tilde{e} = \{b, i^*\} = \argmax \{ r_e - \max\{p_{b},p_i\} \mid e=\{b, i\} \in F(b)\}$\;
        Add $\tilde{e}$ to $M$ with weight $r_{\tilde{e}}$\;
        Charge buyer $b$ a price of $\max\{p_{b},p_{i^*}\}$
    }
    \Return $M$
\end{algorithm2e}


\begin{theorem}
\label{thm:truthfulBip}
    For the problem of finding a maximum-weight matching in a bipartite graph $\mathcal{G}$ in the online vertex arrival model, \Cref{AlgoT} is truthful and $16$-competitive in expectation, i.e., $ 16 \cdot \EXP{w(M)} \geq \EXP{\opt}$, where $\opt$ is the weight of an optimal matching in $\mathcal{G}.$
\end{theorem}
\begin{proof}
Truthfulness is immediate: on arrival, the unmatched items are offered to the buyer for prices determined independently of the buyer's values, and she is assigned one that is maximizing her utility. By misreporting, her utility can therefore only become worse.

The competitive analysis of the \Cref{AlgoT} follows closely that of our algorithm for the matching problem on general graphs with edge-arrivals, \Cref{AlgoOn}.
Let $\Eprime$ be the set of {\em price-feasible} edges, namely, $\Eprime$ contains any edge $e = \{b,i\}$ such that $r_e \geq \max\{p_b, p_i\}$. Note that, given both in \Cref{AlgoOn} and in \Cref{AlgoT} the prices are computed in the exact same way (via the greedy matching solution using the $S$), $\Eprime$ has the exactly the same meaning in both algorithms. Further, note that $\bigcup_{i \in I} F(i) \subseteq \Eprime$, and that the arrival order, be it edge-arrival or vertex-arrival, has no influence whatsoever on the computation of set $\Eprime$.

Therefore, we can establish, in exactly the same way as in the proof of \Cref{thm:edgeGen}, that 
$$
\mathbb{E}\left[\sum_{b \in B} \valEprime{b}\right] \geq \frac{1}{2} \cdot \mathbb{E}\left[ \sum_{b \in B} \valMs{b}\right] = \frac{1}{2} \cdot \E{w(\Ms)} \geq \frac{1}{4} \cdot \E{\opt},
$$
where by $\val{\hat{E}}{b}$ we denote the maximum-weight over all edges of a weighted set $\hat{E} \subseteq E$ adjacent to buyer $b$.

Let $e_b$ be the maximum-reward edge in $\Eprime$ adjacent to buyer $b$, if such an edge exists. By applying \Cref{def:safe} for the edge-arrival case, we say that an edge $e = \{b,i\}$ is \emph{safe for buyer $b$}, if $e$ is the only edge in $\Eprime$ adjacent to $b$, and there is no edge in $\Eprime$ that is adjacent to item $i$ and has smaller weight than $r_e$. By the exact same arguments as in \Cref{lem:matching:safeprob}, for any buyer $b$ and incident edge $e$ such that the probability of $e_b = e$ is non-zero, we have that $\mathbb{P}[e_b \text{ is safe for }b \mid e_b = e] \geq \frac{1}{4}$. Thus, for any realization of the edge weights in $A$, by the analysis of \Cref{lem:matching:safectononsafe}, we have that
\begin{align*}
    \EXP{\sum_{b\in B} r_{e_b} \1{e_b\text{ is safe for }b}} 
    &\geq  \frac{1}{4}  \cdot \E{\sum_{b \in B} r_{e_b}} = \frac{1}{4}  \cdot \E{\sum_{b \in B} \val{\Eprime}{b}} \geq \frac{1}{16} \cdot \E{\opt}.
\end{align*}
Again, since our proof for the edge-arrival case holds for any (adversarial) arrival order, it can be easily verified that \Cref{AlgoT} is going to collect a reward of at least the weight of the safe edges in $\Eprime$ in expectation. Indeed, by following exactly the arguments of \Cref{lem:matching:safecollect}, we can show that
\begin{align*}
    \E{w(M)} \geq \E{\sum_{b\in B} r_{e_b} \1{e_b\text{ is safe for }b}}.
\end{align*}
The competitive guarantee follows simply by combining the above inequalities.
\end{proof}

\subsection{Transversal Matroids}
\label{sec:transversal}
We now show how our algorithm and analysis for the vertex-arrival model can be easily modified to yield an $8$-competitive algorithm for the case of a {\em transversal matroid}. Recall that, given an undirected bipartite graph $\mathcal{G}=(B \cup I, E)$, a transversal matroid is defined over the ground set of buyers $B$, where a set $S \subseteq B$ is independent (i.e., feasible) if the vertices of $S$ can be matched in $\mathcal{G}$ with a subset of the set of items $I$. Each buyer $b\in B$ is associated with a weight $w_b$ drawn independently from distribution $D_b$, and \emph{every} edge incident to buyer $b$ inherits this weight. We modify our algorithm for bipartite matching, \Cref{AlgoBipOn}, to obtain an analogous \sspi{} for the setting of transversal matroids, \Cref{AlgoTransversalOn}.

\begin{algorithm2e}[!t]
\DontPrintSemicolon
    Set $\Eplus=\emptyset$, $M=\emptyset$\;
    Choose an arbitrary ordering on the items of $I$ \tcp{To break ties within each buyer.} 
    Compute a greedy matching $\Ms$ on the graph with edge weights according to sample $S=\{s_b \mid b\in B\}$, breaking ties for each buyer according to the ordering on $I$\;
    \For{each $e = \{b, i\} \in \Ms$}{
        Set $p_{b}=p_{i}=s_b$
        }
    \For{each vertex $k$ not matched in $\Ms$}{
        Set $p_{k}=0$
        }  
    \For{each arriving buyer $b \in B$}{
        Let $i^*$ be the first item in the ordering on $I$ such that $r_b \geq \max\{p_b,p_{i^*}\}$ and $\{b,i^*\}\in E$\;
        Add $\hat e = \{b,i^*\}$ to $\Eplus$ with weight $r_{\hat{e}}$     \tcp{Only used for the analysis.}
        \If{$i^*$ is not matched in $\M$}{
            Add $\hat e$ to $\M$ with weight $r_{\hat e}$
        }
    }
    \Return $\M$
\caption{Transversal Matroid}
\label{AlgoTransversalOn}
\end{algorithm2e}

In the vertex-arrival model for bipartite graphs, we assumed that the distribution of \emph{all edges} is product form, even for the edges corresponding to a single buyer. We emphasize that it is not clear, in general, how to extend our techniques to allow for \emph{arbitrary} correlation between edges of a single buyer. Indeed, since the definition of the set $A$ in \Cref{AlgoBipOff} crucially assumes that the distribution of \emph{all} edges is product form, and if the edge weights for each buyer were not independent, then the equivalence between the offline and online algorithms proved in \Cref{claim:equivalenceBipartite} no longer holds.
However, our technique extends \emph{directly} to a special type of correlation structure induced by the transversal matroid -- the one where each buyer has the \emph{same} weight for each item in its edge set.  By extending the arguments used to show \Cref{thm:bipartite}, we can establish the following guarantee for this setting:
\begin{theorem}
\label{thm:transversal}
    For the problem of finding a maximum-weight independent set in a transversal matroid $\MM$, \Cref{AlgoTransversalOn} is $8$-competitive in expectation, i.e., $ 8 \cdot \EXP{w(M)} \geq \EXP{\opt}$, where $\opt$ is the weight of a maximum-weight independent set in the transversal matroid $\MM.$
\end{theorem}
\begin{proof}
    The proof follows through minor, though careful, modifications to the arguments used for \Cref{thm:bipartite}. We will briefly sketch the main ideas needed.
    
    One can construct the equivalent offline algorithm in a similar manner as described in \Cref{AlgoBipOff}. In particular, we first order the set of items $I$ in the same manner as in \Cref{AlgoTransversalOn}. Then, we construct a set $A$ of $2|B|$ \emph{buyer} weights (as opposed to $2|E|$ \emph{edge} weights in \Cref{AlgoBipOff}) by adding two values, $a_{b,1},a_{b,2}$ drawn independently from $D_b$ for each buyer $b$ to $A$. Then, we iterate over the set $A$ in decreasing order, and for each value $a$ corresponding to a buyer $b$, we iterate over each edge of the buyer in the predetermined ordering on $I$, adding at most one edge, using the same decision rule as in \Cref{AlgoBipOff}. The equivalence between this offline algorithm and \Cref{AlgoTransversalOn} follows by the same arguments presented in \Cref{claim:equivalenceBipartite}.
    
    Employing the same definition of safe edges as in \Cref{def:safeBipartite}, one can prove, in the exact same manner as in \Cref{lem:lowerBoundProbBipartite}, that for any edge $e$ adjacent to buyer $b$, and denoting $e_b$ as the edge in $\Eplus$ with largest weight,
    \begin{align*}
        \PRO{\text{$e_b$ is safe for $b$} \mid e_b = e} \geq \frac{1}{2},
    \end{align*}
    since the proof of that result relied only on the independence of edge weights for \emph{different} buyers. Just as in \Cref{lem:safeRewardBipartite}, the above result immediately implies that
    \begin{align*}
        \EXP{\sum_{b\in B} r_{e_b} \1{\text{$e_b$ is safe for buyer $b$}}}
        \geq \frac{1}{2} \cdot \EXP{\sum_{b\in B} r_{e_b}}.
    \end{align*}
    Next, to relate the weight of the set collected by the algorithm with the reward of the safe edges, we note that, as in \Cref{lem:algRewardBipartite},
    \begin{align*}
        \EXP{w(M)} \geq \EXP{\sum_{b\in B} r_{e_b} \1{\text{$e_b$ is safe for buyer $b$}}},
    \end{align*}
    since the proof of this result holds for \emph{any} run of the algorithm (not just in expectation), and relies only on the fact that, if an edge $e_b=\{b,i\}$ is safe for buyer $b$, then, by construction, either $e_b$ is added to $M$, or another edge $e'=\{b',i\}$ \emph{that is not safe for buyer $b'$} with reward $r_{e'} \geq r_{e_b}$ is collected instead.
    
    Finally, since this greedy algorithm is still a $2$-approximation to $\opt$, just as in the proof of \Cref{thm:bipartite}, it suffices to show that
    \begin{align*}
        \EXP{\sum_{b\in B} r_{e_b}} \geq \frac{1}{2} \cdot \EXP{w(\Ms)} \geq \frac{1}{4} \cdot \EXP{\opt}.
    \end{align*}
    The argument here follows closely that presented in the proof of \Cref{thm:edgeGen}. Indeed, consider the first time, in a run of the algorithm, when an edge $e = \{b,i\}$ is being processed and $i$ has not yet been matched in $\Ms$. Then, if the coin flip for buyer $b$ lands heads (with probability $\frac{1}{2}$), then the algorithm will add $e$ to $M$, and it must be that the item for buyer $b$ in $\Ms$ (if any such exists) has smaller weight than $r_{e_b}$. Otherwise, if the coin flip lands tails, then $e_b$ is added to $\Ms$ (and no edge for buyer $b$ will be added to $\Eplus$). Thus, denoting $e_b$ as the edge in $\Eplus$ with largest weight, it follows that $\EXP{r_{e_b}}\geq \nicefrac{1}{2} \cdot \EXP{\valMs{b}}$ (where $\valMs{b}$ denotes the weight of the edge adjacent to buyer $b$ in $\Ms$). Applying linearity of expectation, the claimed bound, and thus the theorem, follows. 
    \end{proof}

\section{Budget-Additive Combinatorial Auctions}
\label{sec:budgetadditive}
Here, we consider a special case of combinatorial auctions, where the buyers have budget-additive valuations. Specifically, each buyer $b \in B$ values every item $i \in I$ at some fixed value $v_b(i)$, drawn independently from a distribution for every item-buyer pair. However, once the total reward collected by a buyer $b$ has reached a certain budget $C_b$, additional items have no marginal contribution any more. More formally, 
we define $v_b(I')$ for any $I'\subseteq I$ as
\[
    v_b(I')=\min \left\{ \sum_{i\in I'}v_b(i),\, C_b \right\}.
\]
We assume the budgets $C_b$ to be fixed beforehand and known to the algorithm. Without loss of generality, we further assume that for all $i\in I$ and all $b \in B$, $v_b(i)\leq C_b$ (otherwise, we may pre-process those valuations exceeding $C_b$ to have weight exactly $C_b$). Similarly to \Cref{sec:bipartite}, we assume that buyers arrive in an online adversarial order. Each time a buyer arrives, we observe the reward realizations of the buyer for all the items, and we can assign to her any collection of items that are not already assigned to a different buyer. Equipped with a single sample $s_e$ for each buyer-item edge $e = \{b,i\}$, the goal is to maximize the expected reward collected against that of a prophet who knows all the reward realizations beforehand, and simply chooses the optimal assignment. 

Our algorithm uses the following routine to compute a near-optimal solution $\Gs$ using the samples $S$. It starts by sorting in decreasing order the sample values $S = \{v_{1},v_{2},\dots \}$, then, in that order, each edge is added to the solution $\Gs$ (i.e., an item is assigned to a buyer) whenever the item is available (i.e., has not been previously assigned) and its allocation does not exceed the according buyer's budget. When, for the first time, the algorithm tries to add an item to buyer $b$'s bundle that would exceed her budget, then buyer $b$ becomes {\em blocked} forever and cannot collect any more items (including the current one). We remark that, in the above greedy routine, items are only collected if their marginal contribution equals their value.

Once this greedy solution $\Gs$ is computed on the samples, our algorithm uses it to decide on the online allocation. In particular, we associate to each item $i$ a threshold $\tau_i$ corresponding to the value of the buyer that gets the item in $\Gs$, or to zero if $i$ is unallocated. Every time a new buyer $b$ arrives in the online phase, the items for which $b$'s valuation beats the threshold and which are still not allocated (in the actual solution $M$, not in $\Gs$) are considered in decreasing order of $b$'s weights. Each element is then actually allocated if it fits the buyer's budget in both $M$ and in $\Gs$. While it is clear what we mean by saying that the element fits in $M$, we need to specify what it means to say it also fits in $\Gs$: when a buyer-item pair $e=\{b,i\}$ with value $a$ is considered, which beats $\tau_i$ and fits into the budget of $b$ in $M$, the algorithm computes $C_S(b,a)$, i.e. the total value of the items allocated to $b$ in $\Gs$ restricted to {\em only} items with larger value than $a$. Then the item is actually allocated if and only if $C_S(b,a) + a \le C_b$, and buyer $b$ is not blocked in $\Gs$ after considering all samples larger than $a$. The formal pseudocode is given in \Cref{AlgoBAOn}.

\begin{algorithm2e}[!t]
\DontPrintSemicolon
    Set $\Eprime = \emptyset$, $M=\emptyset$\;
    Set $C_R(b)=0$ for each buyer $b\in B$\;
    Compute the greedy solution $\Gs$ with edge weights according to sample $S=\{s_e \mid e\in E\}$\;
    Let $C_S(b,a)$ denote the total weight assigned to buyer $b$ in $\Gs$, after processing edges of sample-weight larger than $a$, or $C_b$ in the case where buyer $b$ is already blocked in $\Gs$ after considering all edges of sampled weight larger than $a$.\;
    \For{each $e = \{b, i\} \in \Gs$}{
        Set $\tau_{i}=s_e$
    }
    \For{each item $i$ not assigned in $\Gs$}{
        Set $\tau_{i}=0$
    }  
    \For{each arriving buyer $b \in B$}{
        \For{each edge $e=\{b,i\}\in E$ with weight $r_e$, in decreasing order of weight}{
            \uIf{$r_e > \tau_i$ and $r_e + C_S(b,r_e) \leq C_b$}{
                Add $e$ to $\Eprime$ with weight $r_e$ \tcp{Only used for the analysis.}
                \uIf{$r_e + C_R(b) \leq C_b$ and $i$ has not yet been assigned in $M$}{
                    Set $C_R(b) = C_R(b) + r_e$, and add $e$ to $M$ with value $r_e$
                }
            }
        }
    }
    \Return $\M$
\caption{Budget-Additive Combinatorial Auctions}
\label{AlgoBAOn}
\end{algorithm2e}

\paragraph{Equivalent offline algorithm}

Following the usual proof technique, we state in \Cref{AlgoBAOff} an offline procedure that is easier to analyze and is equivalent to the online one. We start by noting that the two algorithms retain the same sets of interest. The proof of this claim follows using similar arguments as in Claims \ref{claim:equivalence} and \ref{claim:equivalenceBipartite}.

\begin{algorithm2e}[!t]
\DontPrintSemicolon
    Set $\Eprime=\emptyset$, $\Gs= \emptyset$, $\M=\emptyset$, $I^S = \emptyset$\;
    Set $C_S(b)=C_R(b)=0$ for each $b\in B$\;
    For each $e\in E$, draw from $D_e$ two values $a_{e,1}$ and $a_{e,2}$\;
    Order $A=\{a_{e,1}, a_{e,2}|e\in E\}$ in a decreasing fashion\;
    \For{each value $a \in A$ in the above order}{
        \uIf{$a$ corresponds to an edge $e=\{b,i\}$ that has never been observed before}{
            Flip a fair coin\;
            \uIf{Heads}{
                Mark $e$ as $R$-used\;
                \uIf{$a + C_S(b) \leq C_b$ and $i \in I^S$}{
                        Add $e$ with value $a$ to $\Eprime$\;
                }
            }\uElse{
                Mark $e$ as $S$-used\;
                \uIf{$i \in I^S$}{
                    \uIf{$a + C_S(b) \leq C_b$}{
                        Set $C_S(b) = C_S(b) + a$, add $e$ to $\Gs$ with value $a$, and remove $i$ from $I^S$\;
                    }\uElse{
                        Set $C_S(b) = C_b$
                    }
                }
            }
        }\uElseIf{$a$ corresponds to an edge $e=\{b,i\}$ which is $R$-used and $i\in I^S$}{
            \uIf{$a + C_S(b) \leq C_b$}{
                Set $C_S(b) = C_S(b) + a$, add $e$ to $\Gs$ with value $a$, and remove $i$ from $I^S$\;
            }\uElse{
                Set $C_S(b)=C_b$
            }
        }
    }
    \For{each $e = \{b,i\}$ in the same order as \Cref{AlgoBAOn}}{
        \uIf{$e \in \Eprime$, $a+C_R(b)\leq C_b$, and $i$ has not been assigned in $M$}{
            Add $e$ to $M$ with value $a$, and set $C_R(b) = C_R(b) + a$\;
        }
    }
    \Return $\M$.\;
\caption{Offline Simulation for Budget-Additive Combinatorial Auctions}
\label{AlgoBAOff}
\end{algorithm2e}

\begin{claim}
\label{claim:equivalentBA}
    The sets $\Eprime$, $\Gs$, and $M$ are distributed the same way when computed by \Cref{AlgoBAOn} as when computed by \Cref{AlgoBAOff}.
\end{claim}
\begin{proof}
By symmetry, the sets of sample and reward values, $S$ and $R$, respectively, follow the same distribution in the online and offline setting.
Further, the construction of the greedy solution $\Gs$ in \Cref{AlgoBAOff} depends \emph{only} on the samples $S$, and thus follows the same distribution as in \Cref{AlgoBAOn}. Since $M$ is constructed in the exact same way by both algorithms using the edges $\Eprime$, it suffices to argue, for each fixed realization of edge weights, that $\Eprime$ follows the same distribution in both algorithms.

Now, observe that the checks performed by \Cref{AlgoBAOff} to add an $R$-used edge $e=\{b,i\}$ to $\Eprime$ exactly correspond to those in \Cref{AlgoBAOn}. Indeed, neither algorithm collects an edge with weight smaller than its sample (guaranteed by the threshold $\tau_i$ in \Cref{AlgoBAOn}, and by the check that $i\in I^S$ in \Cref{AlgoBAOff}). Since \Cref{AlgoBAOff} processes the \emph{edges} in decreasing order of weight, item $i$ is removed from $I^S$ at the time when an edge with weight $a$ is assigned to $i$, which implies that the threshold $\tau_i$ for item $i$ will be set to $a$. Again because of the greedy processing order, the sample capacity $C_S(b)$ when an element $e=\{b,i\}$ with weight $a$ is considered in \Cref{AlgoBAOn} exactly corresponds to $C_S(b,a)$ in \Cref{AlgoBAOff}. Hence, it follows that $\Eprime$ follows the same distribution in both algorithms. 
\end{proof}

\paragraph{Correctness and competitive analysis}
The correctness of \Cref{AlgoBAOn} follows by the fact that the algorithm never assigns a single item to more than one buyer. We focus now on the analysis of the competitive guarantee of our policy. For any set of (weighted) edges $\hat{E}$, we denote by $\hat{E}(b)$ the (weighted) subset of edges in $\hat{E}$ that are incident to buyer $b$. As a first step, we establish that the greedy assignment $\Gs$ is a constant-approximation to the maximum-weight assignment:
\begin{lemma}
\label{lem:greedyBA}
The solution $G_S$ computed on the samples $S$ is a $3$-approximation to the optimal assignment over $S$.
\end{lemma}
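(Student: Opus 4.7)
The plan is to prove the $3$-approximation via a charging argument that compares the sample-optimal assignment, call it $O^*$, to $\Gs$. I would begin by partitioning the pairs $(b,i) \in O^*$ into three classes: (A) pairs with $(b,i)\in \Gs$; (B) pairs with $(b,i)\notin \Gs$ but $i$ assigned in $\Gs$ to some other buyer $b'\neq b$; and (C) pairs where item $i$ is not assigned at all in $\Gs$. Trivially $w(A) \leq w(\Gs)$, so the real burden is to bound the contributions of $B$ and $C$ by a constant times $w(\Gs)$.

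Next I would split $B$ according to greedy's processing order: let $B_1$ be the subset with $v_{b'}(i)\geq v_b(i)$ and $B_2$ the subset with $v_{b'}(i) < v_b(i)$. For $B_1$, each pair's value is dominated by the value of the corresponding $(b',i)\in \Gs$; since distinct $(b,i)\in B_1$ point to distinct items of $\Gs$, and none of those $(b',i)$ pairs lies in $A$ (as any item of $O^*$ has a unique assigned buyer), I obtain $w(A) + w(B_1) \leq w(\Gs)$.

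For $B_2$ and $C$, the key structural observation is that the buyer $b$ must be \emph{blocked} in $\Gs$. For any $(b,i)\in B_2$, since $v_b(i) > v_{b'}(i)$, greedy processes $(b,i)$ before $(b',i)$, so $i$ is still available when $(b,i)$ is considered. Likewise, for $(b,i)\in C$, item $i$ is never assigned in $\Gs$ and hence is available throughout. In both cases, $(b,i)\notin \Gs$ while $i$ is available at the relevant step can only happen if $b$ was already blocked or gets blocked exactly at that step. In either case, the blocking item $i^*(b)$ satisfies $v_b(i^*(b)) \geq v_b(i)$ for every $(b,i) \in B_2 \cup C$ incident to $b$.

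The main quantitative step, which I expect to be the principal technical hurdle, is to relate the budget $C_b$ to $v_{\Gs}(b):=\sum_{(b,i')\in \Gs} v_b(i')$. Since every item in $b$'s greedy bundle was processed before $i^*(b)$, each has value at least $v_b(i^*(b))$, so $v_{\Gs}(b) \geq v_b(i^*(b))$. Combined with the blocking inequality $v_{\Gs}(b) + v_b(i^*(b)) > C_b$, this yields $C_b < 2\,v_{\Gs}(b)$. The pairs of $B_2 \cup C$ incident to a fixed blocked buyer $b$ all lie in $O^*$, so by budget-additivity their total weight is at most $C_b$. Summing over blocked buyers gives $w(B_2) + w(C) \leq 2\,w(\Gs)$, and combining with $w(A)+w(B_1)\leq w(\Gs)$ concludes $w(O^*)\leq 3\,w(\Gs)$, as required.
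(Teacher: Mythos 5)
Your proof is correct and follows essentially the same route as the paper's: partition the optimal pairs by how greedy treats them (kept, reassigned to a higher-value buyer, or lost), charge the first two classes directly into $\Gs$, and bound the lost class using the fact that any blocked buyer has $v_{\Gs}(b) > C_b/2$. The only cosmetic differences are where you draw the boundary between the "reassigned" and "lost" classes and that you add the two bounds rather than balancing a max, both of which yield the same factor of $3$.
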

\begin{proof}
Let us denote by $\opt(b)$ the items assigned to buyer $b$ in an optimal assignment, and $v^*_b(i)$ the value buyer $b$ realizes from item $i$ in $\opt$. Note that every buyer might (w.l.o.g.) have at most one fractional item $i_b$ in his $\opt$-bundle, where, in that case, $v^*_b(i_b)$ denotes only a fraction of $v_b(i_b)$, accordingly.

We compute our greedy solution $\Gs$ in a way that might miss out on some of the value realized in $\opt$. 
Let's focus on a buyer $b$ and $\opt(b)$. For any $i \in \opt(b)$, one of the following cases is true:
\begin{enumerate}
    \item $i$ is also part of $b$'s greedy bundle $\Gs(b)$.
    \item $i$ is part of some other buyer's $\Gs(b')$, and $v_{b'}(i)\geq v^*_b(i)$.
    \item $i$ is assigned for a lower value than in $\opt$, or not assigned at all.
\end{enumerate}
Let us call items for which these cases hold $I_1,\, I_2$ and $I_3$, accordingly.
The value $\opt(I_1)$ realized on items of the first case is clearly at most that realized by $\Gs(I_1)$, since the marginal contribution of each item in $\Gs$ always equal to its value.
Further, for the same reason as above, together with the fact that each item in $I_2$ gets a better value in $\Gs$, it follows that $\opt(I_2)\leq \Gs(I_2)$.
Finally, in order to conclude the proof, it suffices to bound the loss incurred due to the items in $I_3$. 
If $i\in I_3,\, i\in \opt(b)$, then at the point where value $v_b(i)$ is considered by the greedy, either (i) item $i$ is no longer available, or (ii) buyer $b$ is blocked, or (iii) adding $v_b(i)$ would exceed $C_b$.
Clearly, case (i) cannot be true, since if $i$ had been already assigned by $\Gs$, then because of the decreasing order where the greedy algorithm parses the edge, item $i$ would have been in $I_2$.
Therefore, it has to be that either $b$'s budget was running full, or $b$ was blocked. In both cases, and again due to the decreasing order of consideration, $\Gs(b)\geq \frac{C_b}{2}$.

Now, let us write the value realized by $\opt$ on item set $I'\subseteq I$ as $\opt(I')$. According to the previous considerations, we get
\[
    w(\Gs) \geq \opt(I_1) + \opt(I_2)\text{, and } w(\Gs) \geq \frac 12 \cdot \opt(I_3)
\]
because buyers with \emph{any} $I_3$-items in $\opt(b)$ have at least half-full budget. Together, this yields
\begin{align*}
    \Gs &\geq \max \left\{ (\opt(I_1) + \opt(I_2)),\,\frac 12 \opt(I_3)  \right\} \\
    &\geq \min_{g,f\geq 0 : g+f=1} \max \left\{ g,\, \frac f2 \right\} \cdot \opt =\frac 13 \opt.    
\end{align*}
\end{proof}
In the above algorithm, $\Eprime$ is the (weighted) set of all edges with a reward greater than the according item threshold, which are also considered before the according buyer is blocked in $\Gs$.  Note that similarly to previous proofs, and due to the fact that conditions for adding an element to $\Eprime$ or $\Gs$ are exactly the same, it holds that $\mathbb{E}[w(\Eprime)]$ is at least half of $\mathbb{E}[w(\Gs)]$: this is true because every edge considered for the \emph{first} time lands in both sets equally likely, and an edge considered for the \emph{second} time can only be added to $\Gs$, but then with at most the same value as its first occurrence. As in the previous sections, the proof of our claim relies on constructing a special subset of the edges in $\Eprime$, called the \emph{safe edges}:

\begin{definition}[Safe edges for budget-additive buyers]\label{def:safeBA}
We call an edge $e=\{b,i\}\in E$ \emph{safe for buyer $b$} if the following conditions are true:
\begin{enumerate}
    \item Edge $e$ is in $\Eprime$ with some weight $r_e$.
    \item No edge in $\Eprime$ incident to item $i$ has weight smaller than $r_e$.
    \end{enumerate}
\end{definition}

However, opposed to our earlier proofs, being safe in the sense of belonging to $\Esafe = \{e \in \Eprime \mid \exists~b\in B\text{ s.t. $e$ is safe for $b$}\}$ does not suffice for us to recover an edge in the actual SSPI:
still, it might happen that at the point where buyer $b$ arrives, edge $e=\{b,i\}$ cannot be (fully) realized because $C_b$ is exceeded.
Indeed, we will only be able to extract the value from $\Esafe$ that \emph{does not} exceed the capacity $C_b$ of each buyer $b$.
To bound this, we will have to intersect $\Esafe$ with a (weighted) set $E^+$, which incorporates the capacity constraints for each buyer $b$: 
\[
    E^+(b)=\left\{ a \in \Eprime(b) \mid a + \sum_{a'\in\Eprime(b),\,a'>a} a'\leq C_b\right\}.
\]
We now show a crucial property of $\Eplus$ -- namely, that its expected weight can be lower-bounded using that of the greedy solution, $\Gs$:

\begin{lemma}\label{lem:plusVsGreedyBA}
Let $\Gs$ be the greedy matching on the set of samples. Then,
\begin{align*}
    \EXP{w(\Eplus)} \geq \frac{1}{4} \cdot \EXP{w(\Gs)}.
\end{align*}
\end{lemma}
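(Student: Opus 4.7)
The plan is to prove the inequality via two per-buyer bounds combined by linearity of expectation. For every buyer $b$ I would establish (i) a deterministic inequality $w(E^+(b)) \geq \tfrac{1}{2}\min\{w(E'(b)), C_b\}$, and (ii) an expected-value inequality $\mathbb{E}[\min\{w(E'(b)), C_b\}] \geq \tfrac{1}{2}\mathbb{E}[w(G_S(b))]$. Composing the two and summing over $b$ then yields
\[
\mathbb{E}[w(E^+)] \;=\; \sum_{b}\mathbb{E}[w(E^+(b))] \;\geq\; \tfrac{1}{2}\sum_{b}\mathbb{E}[\min\{w(E'(b)), C_b\}] \;\geq\; \tfrac{1}{4}\sum_{b}\mathbb{E}[w(G_S(b))] \;=\; \tfrac{1}{4}\mathbb{E}[w(G_S)].
\]

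For the deterministic step, I would sort the weights of $E'(b)$ decreasingly as $v_1 \geq v_2 \geq \cdots$, recalling that $E^+(b)$ is by definition the longest prefix whose cumulative weight is at most $C_b$. If $w(E'(b)) \leq C_b$, then $E^+(b) = E'(b)$ and the bound is immediate. Otherwise, letting $k$ be the largest index with $v_1 + \cdots + v_k \leq C_b$, the standing assumption $v_i \leq C_b$ forces $k \geq 1$ and $v_{k+1} \leq v_k \leq w(E^+(b))$. The overflow $v_1 + \cdots + v_{k+1} > C_b$ then gives $w(E^+(b)) > C_b - v_{k+1} \geq C_b - w(E^+(b))$, hence $w(E^+(b)) > C_b/2 = \tfrac{1}{2}\min\{w(E'(b)), C_b\}$.

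For the in-expectation step, I would adapt the fair-coin coupling the paper invokes immediately before the lemma to justify $\mathbb{E}[w(E')] \geq \tfrac{1}{2}\mathbb{E}[w(G_S)]$. The key observation is that in \cref{AlgoBAOff}, each first-encounter event of an edge $e = \{b,i\}$ is resolved by a single fair coin whose heads-branch (R-used) and tails-branch (S-used) carry identical admission conditions, namely $a_{e,1} + C_S(b) \leq C_b$ together with $i \in I^S$, both evaluated against the state just before the flip. Conditioning on that state and applying coin symmetry inductively, the expected weight reaching $E'(b)$ at first encounters equals the expected weight reaching $G_S(b)$ at first encounters. Second-encounter contributions to $G_S(b)$ arise from R-used edges at value $a_{e,2} \leq a_{e,1}$, which is dominated by the same edge's first-encounter contribution to $E'(b)$ and so is absorbed by the same accounting. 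Capping at $C_b$ is harmless on the mirror side because $w(G_S(b)) \leq C_b$ deterministically, so the weight that the coupling transfers into $E'(b)$ fits inside $C_b$ and is not truncated.

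I expect the main obstacle to be the formalization of the coupling in the second step: because $C_S(b)$ and the available-item set $I^S$ evolve path-dependently on the coin outcomes, a naive per-edge coupling propagates to downstream decisions. The remedy I have in mind is to condition on the algorithm state immediately before each first-encounter event and apply the fair-coin symmetry one step at a time, handling the second-encounter contributions through the pointwise inequality $a_{e,2} \leq a_{e,1}$ built into the coupling.
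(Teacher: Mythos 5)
Your step (i) is correct and mirrors the paper's ``half the budget'' prefix argument, but your step (ii) is where the proof breaks, and the gap is not merely one of formalization. The sub-claim that every second-encounter contribution to $\Gs(b)$ is ``dominated by the same edge's first-encounter contribution to $\Eprime(b)$'' is false: an $R$-used edge $e=\{b,i\}$ can fail the admission test at its first encounter (because $a_{e,1}+C_S(b)>C_b$ at that moment) and yet pass it at its second encounter (because $a_{e,2}$ is smaller), so it adds weight to $\Gs(b)$ while contributing nothing to $\Eprime(b)$. A concrete instance: $C_b=1$, item $X$ with realizations $(0.700,0.695)$ and item $Y$ with realizations $(0.310,0.300)$. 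In every coin outcome the sample budget sits at $0.695$ or $0.700$ when $Y$'s larger value $0.310$ is examined, so $Y$ is never admitted to $\Eprime(b)$; but whenever $Y$'s coin is heads, its smaller value $0.300$ still fits and is added to $\Gs(b)$. Averaging over the four coin outcomes gives $\EXP{\min\{w(\Eprime(b)),C_b\}}=\EXP{w(\Eprime(b))}=0.35$ while $\tfrac{1}{2}\EXP{w(\Gs(b))}=\tfrac{1}{2}\cdot 0.8475\approx 0.424$, so your step (ii) fails for this realization of $A$ --- and since your coupling argues pointwise over the coin flips conditional on $A$, the proof cannot be repaired by averaging over realizations.

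This is exactly the difficulty the paper's proof is engineered to avoid. Instead of comparing $\Eprime(b)$ to $\Gs(b)$ directly, it introduces the auxiliary set $X(b)$ consisting of the \emph{larger} realization $a_{e,1}$ of every edge in $\Gs(b)\cup\Eprime(b)$; this set dominates $w(\Gs(b))$ deterministically even when an edge reaches $\Gs(b)$ only via its second encounter. The half-budget truncation is then applied to $X(b)$ (yielding $X^+(b)$ with $\EXP{w(X^+(b))}\geq\tfrac{1}{2}\EXP{w(\Gs(b))}$, by the same case analysis as your step (i)), and only afterwards is the fair-coin symmetry invoked, element by element, to pass from $X^+(b)$ to $E^+(b)$ at a further cost of $\tfrac{1}{2}$. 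In short, the order of the two halvings must be (truncate a set containing both branches' contributions, then flip coins), not (flip coins to get $\Eprime$, then truncate), because $\Eprime(b)$ alone does not capture the second-encounter mass of $\Gs(b)$.
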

\begin{proof}
For any fixed value realization $A$ and any buyer $b$, we define 
\[
    X(b)=\left\{ (e,a)\in \Gs(b)\cup\Eprime(b) \mid a \text{ is the larger value drawn for }e\text{ in }A \right\}.
\]
where the tuple $(e,a)$ refers to adding edge $e$ with a weight $w(e)=a$. 
Informally, for each buyer $b$, the set $X(b)$ contains all the values $a_{e,1}$ (i.e., the larger realization) for each edge $e \in \Gs(b) \cup \Eprime(b)$.

Note that deterministically, $w(X) \geq w(\Gs)$, since for every edge $e \in \Gs$ that is weighted with the smaller realization $a_{e,2}$, it has to be that $a_{e,1}$ belongs to $\Eprime$ and, thus, also belongs to $X$. Further, we define
\[
    X^+(b)= \left\{(e, a)\in X(b) \mid a + \sum_{(e,a')\in X(b),\,a'>a}a'\leq C_b\right\}.
\]
as the subset of weighted edges $(e,a)\in X(b)$ such that $a$, together with all weights $a'>a$ in $X(b)$, does not exceed the capacity $C_b$ of buyer $b.$

Given the above definitions, our proof consists of showing that, for any buyer $b \in B$, the following two properties hold:
\begin{itemize}
    \item[(i)] $2 \cdot \mathbb{E}\left[w(X^+(b)) \right]\geq \mathbb{E}\left[w(\Gs(b)) \right] $
    \item[(ii)] $2 \cdot \mathbb{E}\left[ w(E^+(b))\right]\geq  \mathbb{E}\left[w(X^+(b)) \right]$.
\end{itemize}

To show (i), we first rewrite the expected weight of $X^+(b)$, for any buyer $b$, as
\begin{align*}
  \mathbb{E}\left[ w(X^+(b))\right] =~&\PRO{w(X(b))>C_b } \cdot \mathbb{E}\left[w(X^+(b)) \mid w(X(b))>C_b\right]\\
  &+\PRO{ w(X(b))\leq C_b} \cdot \mathbb{E}\left[ w(X^+(b)) \mid w(X(b))\leq C_b \right].  
\end{align*}
Now if at some point, an edge with value $a$ was added to $X(b)$ that made its weight exceed $C_b$, at this point in time it already was true that $w(X^+(b))\geq \frac{C_b}{2}$, due to decreasing order. Thus, it follows that $\mathbb{E}\left[ w(X^+(b))\right]$ is not smaller than the following quantity:
\[
    \PRO{w(X(b))>C_b }\cdot \frac{C_b}{2}
    + \PRO{ w(X(b))\leq C_b} \cdot \mathbb{E}\left[ w(X^+(b)) \mid w(X(b))\leq C_b \right].    
\]
Recall that, independently of the realization of $A$ and the random coin flips, it holds $w(\Gs(b))\leq w(X(b))$. Thus, since the sets $X(b)$ and $X^+(b)$ are identical in the case that $X(b)\leq C_b$, it follows that
\begin{align*}
    \mathbb{E}\left[ w(X^+(b)) \mid w(X(b))\leq C_b \right] 
    &= \mathbb{E}\left[ w(X(b)) \mid w(X(b))\leq C_b \right]\geq \mathbb{E}\left[ w(\Gs(b)) \mid w(X(b))\leq C_b\right].    
\end{align*}
In addition, it holds (since our greedy routine ignores partial items) $w(\Gs(b))\leq C_b$, thus also $\mathbb{E}\left[w(\Gs(b)) \mid w(X(b))>C_b\right]\leq C_b$, and therefore
\begin{align*}
  \mathbb{E}\left[ w(X^+(b))\right] 
  \geq &~\PRO{w(X(b))>C_b } \cdot \frac{1}{2} \cdot \mathbb{E}\left[ w(\Gs(b)) \mid w(X(b))>C_b  \right]\\
  &~+\PRO{ w(X(b))\leq C_b } \cdot \mathbb{E}\left[ w(X^+(b)) \mid w(X(b))\leq C_b \right] \\
  \geq& \frac{1}{2} \cdot \mathbb{E}\left[w(\Gs(b))\right].  
\end{align*}
Now in order to prove statement (ii), we fix some realization of $A$, some value $a\in A$, and some run of the offline algorithm up to the point where $a$ is considered. Note that if $a$ is the weight of the second occurrence of the according edge, it will not be considered for $X(b)$ or $E'(b)$.
However if it is the first and price-feasible, it is added to $X(b)$ with probability $1$ (as it is added to either $\Gs$ or $\Eprime$), and to $\Eprime(b)$ with probability $\frac 12$.
Now, if $(e,a)\in X(b)\cap X^+(b)$, then still, it is added also to $\Eprime(b)$ and therefore, $E^+(b)$, with probability $\frac 12$. Thus, it follows that $\mathbb{E}\left[w(E^+(b))\right]\geq \frac 12 \cdot \mathbb{E}\left[w(X^+(b))\right]$. The proof of the lemma follows by combining inequalities (i) and (ii). 
\end{proof}
Next, we lower bound the probability that an edge $e = \{i,b\} \in \Eplus$ is safe for buyer $b$:

\begin{lemma}\label{lem:safeProbBA}
For any buyer $b \in B$ and edge $e=\{b,i\}$ which is in $\Eplus$ with non-zero probability, we have that $
    \PRO{e \text{ is safe for buyer } b \mid e = \{b,i\} \in \Eplus} \geq \nicefrac{1}{2}.$
\end{lemma}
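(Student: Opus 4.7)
The plan is to adapt the argument of \cref{lem:lowerBoundProbBipartite} to the budget-additive setting. Condition on $e \in \Eplus$, which implies $e \in \Eprime$ and hence makes the first condition of \cref{def:safeBA} automatic; the task reduces to lower bounding by $1/2$ the probability that no edge in $\Eprime$ incident to item $i$ has weight smaller than $r_e$.

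The approach is to identify a single critical coin flip in the execution of \cref{AlgoBAOff} whose outcome controls whether the second condition of \cref{def:safeBA} is violated. Inspecting the offline algorithm, an edge $e' = \{b', i\}$ can enter $\Eprime$ only at its first occurrence and only if (a) the coin flip at that occurrence is Heads, (b) $i \in I^S$ at that time, and (c) $a' + C_S(b') \leq C_{b'}$ at that time. Let $T$ denote the time at which $e$ is added to $\Eprime$. Since values are parsed in decreasing order, the second condition of \cref{def:safeBA} fails iff, for some $e' = \{b',i\}$ with $b' \neq b$, its first occurrence after $T$ satisfies (a)--(c). I will call a first occurrence satisfying (b) and (c) after $T$ a \emph{potential bad moment}, and consider the \emph{first} such moment, $a_1^*$, if any, corresponding to some edge $e_1^* = \{b_1^*, i\}$. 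If no such moment exists, the second condition holds deterministically.

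The heart of the argument is then to exploit the fact that, when $e_1^*$ exists, on Heads at $a_1^*$ condition (c) together with $e_1^*$ being $R$-used forces $e_1^*$ into $\Eprime$ with weight $a_1^* < r_e$ (second condition violated), while on Tails conditions (b) and (c) force $e_1^*$ into $\Gs$ and permanently remove $i$ from $I^S$ (so no subsequent first occurrence can satisfy (b), and the second condition continues to hold). Since the coin at $a_1^*$ is fair and independent of the events preceding $a_1^*$, and since $b_1^* \neq b$ ensures it is independent of all coins for edges incident to $b$ (the only coins involved in the event $\{e \in \Eplus\}$), we obtain $\PRO{\text{second condition violated}\mid e \in \Eplus} \leq \tfrac{1}{2}$.

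The main obstacle, compared to the simpler matching analysis of \cref{lem:lowerBoundProbBipartite}, is to justify the claimed independence in the presence of the capacity condition appearing in the definition of $\Eplus$. This amounts to checking that the capacity constraint defining $\Eplus(b)$ depends only on $\Eprime(b)$, and in particular only on first-occurrence coin flips for edges incident to $b$, which are disjoint from the single critical coin at $e_1^*$; a secondary verification is that the state-dependent conditions (b) and (c) at $a_1^*$ are measurable with respect to the randomness strictly preceding the coin flip at $a_1^*$, so that the existence of $e_1^*$ and the fairness of that coin are indeed mutually independent.
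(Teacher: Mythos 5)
Your proof is correct and follows essentially the same route as the paper's: both identify the first moment after $e$ enters $\Eprime$ at which an edge incident to item $i$ could be allocated in $\Gs$, and argue that either a deterministic second occurrence or a single fair coin flip removes $i$ from $I^S$ with probability at least $\nicefrac{1}{2}$, blocking any lighter edge from entering $\Eprime$. One minor caveat: the parenthetical claim that $\{e \in \Eplus\}$ involves only coins of edges incident to $b$ is not quite accurate (membership in $\Eprime(b)$ also depends on which items remain in $I^S$, hence on other buyers' coins), but your measurability argument---that $\{e\in\Eplus\}$ is determined by the history strictly preceding the critical coin flip---is the correct justification and suffices on its own.
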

\begin{proof}
Consider the time when, during a run of \Cref{AlgoBAOff}, $e$ is added to $\Eplus$.
By construction, the algorithm only adds edges which are $R$-used and are being processed for the first time. Now, consider the next time when an edge $e'=\{b',i\}$ is being considered by the algorithm with weight $a$ which allows it to be allocated in $\Gs$. Now, if $e'$ is being processed for the second time, then it must be $R$-used and added to $\Gs$ (and thus, at this point, $i$ is removed from $I^S$). Thus, in this case, no other edge $e''=\{b'',i\}$ with value smaller than $r_e$ could be added to $\Eplus$, guaranteeing that both conditions of \Cref{def:safeBA} are satisfied. In the case when $e'$ is processed for the first time, then, with probability $\nicefrac{1}{2},$ $e$ is $S$-used and added to $\Gs$, again guaranteeing that no other element of smaller weight is added to $\Eplus,$ and thus that the conditions of \Cref{def:safeBA} are satisfied.
\end{proof}
In the following lemmas, we relate the expected weight of $M$ with that of $E^+$.

\begin{lemma}\label{lem:safeAndPlusBA}
Recall that $\Esafe$ is the set of edges $e=\{b,i\}\in \Eprime$ which are safe for some buyer $b$. Then, $\EXP{w(\Esafe \cap \Eplus)} \geq \frac{1}{2} \cdot \EXP{w(\Eplus)}.$
\end{lemma}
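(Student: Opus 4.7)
The plan is to reduce the inequality to a per-edge statement via linearity of expectation, and then to invoke \cref{lem:safeProbBA} pointwise. The key observation enabling this reduction is that every edge in $\Eplus \subseteq \Eprime$ is $R$-used (so $r_e$ equals the larger of the two realizations drawn for $e$), which makes $r_e$ deterministic once we condition on the value set $A$.

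First, I would condition on an arbitrary realization of $A$, and as in the proof of \cref{lem:matching:safectononsafe} assume without loss of generality that $a_{e,1} > a_{e,2}$ for each edge $e$. Since $\Eprime$ contains only $R$-used edges, whenever $e=\{b,i\}\in\Eplus$ its weight is $r_e = a_{e,1}$. By linearity of expectation,
\begin{align*}
\EXP{w(\Eplus \cap \Esafe) \mid A}
&= \sum_{e=\{b,i\}} a_{e,1}\cdot \PRO{e \in \Eplus,\ e\text{ safe for }b \mid A} \\
&= \sum_{e=\{b,i\}} a_{e,1}\cdot \PRO{e\text{ safe for }b \mid e\in\Eplus,\ A}\cdot \PRO{e\in\Eplus \mid A}.
\end{align*}

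Second, I would apply \cref{lem:safeProbBA} to each term. The proof of that lemma depends only on a coin flip associated to the next value parsed after $e$ is added to $\Eplus$ that is incident to item $i$; this coin flip is independent of $A$ and of the event $e\in\Eplus$. Hence the same bound holds conditionally, namely $\PRO{e\text{ safe for }b \mid e\in\Eplus,\ A}\geq \tfrac{1}{2}$.

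Combining the two steps, and recalling that $\EXP{w(\Eplus) \mid A} = \sum_{e=\{b,i\}} a_{e,1}\cdot \PRO{e\in\Eplus \mid A}$, we obtain
\begin{align*}
\EXP{w(\Eplus \cap \Esafe) \mid A}
\;\geq\; \frac{1}{2}\sum_{e=\{b,i\}} a_{e,1}\cdot \PRO{e\in\Eplus \mid A}
\;=\; \frac{1}{2}\,\EXP{w(\Eplus) \mid A}.
\end{align*}
Taking expectation over $A$ yields the claimed inequality. The only subtle point, and the step I would check carefully, is the conditional version of \cref{lem:safeProbBA}; but since that lemma's argument isolates a single independent coin flip occurring \emph{after} $e$ enters $\Eplus$, the bound is insensitive to further conditioning on $A$, so no real obstacle arises.
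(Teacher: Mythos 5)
Your proposal is correct and follows essentially the same route as the paper: fix the realization $A$, note that every edge in $\Eplus$ is $R$-used so its weight is the larger realization $a_{e,1}$, expand $\EXP{w(\Esafe\cap\Eplus)}$ by linearity, factor the joint probability as $\PRO{e\text{ safe}\mid e\in\Eplus}\cdot\PRO{e\in\Eplus}$, and apply \cref{lem:safeProbBA} termwise. Your extra care about the conditional validity of \cref{lem:safeProbBA} given $A$ is well placed and resolved correctly, since that lemma's bound comes from a coin flip occurring after $e$ enters $\Eplus$, exactly as the paper implicitly assumes.
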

\begin{proof}
Here, the argument is essentially the same as in \Cref{lem:matching:safectononsafe}, replacing the use of \Cref{lem:matching:safeprob} with \Cref{lem:safeProbBA}. In particular, note that, taking (without loss of generality) $a_{e,1}$ to be the larger realization of the edge weight for $e$,
\begin{align*}
    \EXP{w(\Esafe \cap \Eplus)}
    &=  \EXP{\sum_{e=\{b,i\}\in E^+} r_e \cdot \1{\text{$e$ is safe for $b$}}} 
    \\
    &= \sum_{e=\{b,i\}\in E} a_{e,1} \cdot \PRO{\text{$e$ is safe for $b$, $e\in\Eplus$}}.
\end{align*}
Thus, by \Cref{lem:safeProbBA},
\begin{align*}
    \PRO{\text{$e$ is safe for $b$, $e\in\Eplus$}}
    &= \PRO{\text{$e$ is safe for $b$} \mid e\in\Eplus} \PRO{e\in\Eplus}\\
    &\geq \frac{1}{2} \cdot \PRO{e\in \Eplus}.
\end{align*}
By combining these two results, we conclude that
\begin{align*}
    \EXP{w(\Esafe \cap \Eplus)}
    \geq \frac{1}{2} \sum_{e=\{b,i\}\in E} a_{e,1} \cdot \PRO{e\in \Eplus}
    &= \EXP{w(\Eplus)},
\end{align*}
as desired.
\end{proof}

\begin{lemma}\label{lem:safeVsMBA}
Let $M$ be the assignment obtained in a run of \Cref{AlgoBAOff}. Then, it holds that $
    \EXP{w(M)}
    \geq \EXP{w(\Esafe \cap \Eplus)}.$
\end{lemma}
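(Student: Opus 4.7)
The plan is to establish the stronger deterministic inequality $w(M) \geq w(\Esafe \cap \Eplus)$, from which the lemma follows by taking expectations. I will first record two structural facts that drop out of \cref{def:safeBA}: an edge $e = \{b, i\}$ can be safe only for its own buyer $b$ (since safeness requires $e \in \Eprime$), and each item $i$ supports at most one safe edge, since otherwise condition~2 would be violated for the larger-weight candidate by the smaller-weight one. In particular, $\Esafe \cap \Eplus \subseteq \Eprime$, and any two distinct edges in this set that share an item cannot both be safe.

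The heart of the argument classifies each $e = \{b,i\} \in \Esafe \cap \Eplus$ by what \cref{AlgoBAOn} does when it processes $e$ on $b$'s arrival: either (i) $e$ is admitted into $M$; (ii) item $i$ is already assigned in $M$; or (iii) $r_e + C_R(b) > C_b$. The first main step is to rule out case (iii) entirely. Since \cref{AlgoBAOn} iterates over $b$'s edges in decreasing order of $r_e$ and only admits edges of $\Eprime$ into $M$, the capacity already used by $b$ at the moment $e$ is processed satisfies
\begin{equation*}
    C_R(b) \;\leq\; \sum_{e' \in \Eprime(b),\, r_{e'} > r_e} r_{e'} \;\leq\; C_b - r_e,
\end{equation*}
where the second inequality is exactly the defining property of $\Eplus(b)$. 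Hence the algorithm's budget check passes automatically.

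The second main step is a disjoint charging argument handling case (ii). For each blocked $e = \{b,i\}$, pick the edge $e' = \{b',i\} \in M$ that already holds item $i$; since $e' \in \Eprime$ and $e$ is safe, condition~2 of \cref{def:safeBA} yields $r_{e'} > r_e$. Distinct blocked edges produce distinct witnesses because they correspond to distinct items. Moreover, no witness $e'$ can itself be a case-(i) edge: if $e'$ were safe for $b'$, then $e'$ would be \emph{the} unique safe edge incident to $i$, contradicting the role of $e$. Thus the collected safe edges and the witnesses form disjoint subsets of $M$, and summing their weights yields $w(M) \geq w(\Esafe \cap \Eplus)$.

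The main subtlety is the interaction between the two budget-accounting objects used by the algorithm: the $C_S(b,\cdot)$ gate that defines membership in $\Eprime$, and the purely $\Eprime$-based sum that defines $\Eplus$. Aligning the algorithm's decreasing-reward processing order with the exact sum in the definition of $\Eplus$ is what makes the elimination of case (iii) go through deterministically; once that is secured, the remainder is routine disjoint charging against $M$.
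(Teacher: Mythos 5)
Your proposal is correct and follows essentially the same route as the paper's proof: establish the pointwise inequality $w(M)\ge w(\Esafe\cap\Eplus)$ by showing that a safe edge in $\Eplus$ can never fail the buyer-budget check (via the defining inequality of $\Eplus(b)$ and the decreasing-reward processing order), and then charge each item-blocked safe edge to the heavier, necessarily non-safe $\Eprime$-edge already occupying its item. Your version merely makes explicit the budget chain of inequalities and the disjointness of the charging, which the paper leaves implicit.
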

\begin{proof}
We prove a stronger result -- namely, for \emph{any} run of \Cref{AlgoBAOff}, $w(M) \geq w(\Esafe \cap \Eplus)$.
Consider any edge $e=\{b,i\} \in \Eplus$ which is safe for buyer $b$. By construction of the set $\Eplus$, and since edges in $\Eprime$ are processed in decreasing order of weight, when $e$ is processed by the algorithm, it must be the case that either $e$ is added to $M$, or some other edge $e'=\{b',i\}\in\Eprime$ was already chosen incident to item $i$ before edge $e$ was processed during the online phase (since the algorithm only accepts items in the set $\Eprime$). 
We emphasize here that by definition and due to the weight-decreasing order, an edge $e=\{b,i\}\in \Eplus$ can \emph{never} be rejected because it exceeds the buyer's capacity $C_b$. 
Since $e$ is safe for buyer $b$, and since \Cref{AlgoBAOff} \emph{never} collects truncated rewards by construction, it must be the case that $r_{e'} > r_e$. Therefore, $e'$ cannot be safe for buyer $b'$, since the second condition \Cref{def:safeBA} is violated by edge $e$. Hence, for any run of \Cref{AlgoBAOff} for which $e$ is safe for buyer $b$, an element of weight at least $r_e$ is collected in $M$. By aggregating this result for all edges in $\Eplus$, we reach the desired result. 
\end{proof}
We are ready for the main theorem in the budget-additive setting. 

\begin{theorem}\label[theorem]{thm:budgetAdditive}
For the problem of finding a maximum-weight assignment in a budget-additive combinatorial auction in the online vertex arrival model, \Cref{AlgoBAOn} is $24$-competitive in expectation, i.e., $
    24 \cdot \EXP{w(M)} \geq \EXP{\opt},$
where $\opt$ is the weight of a maximum-weight assignment.
\end{theorem}
\begin{proof}
Combining the results of \Cref{lem:safeVsMBA} and \Cref{lem:safeAndPlusBA}, we can relate the expected reward collected by \Cref{AlgoBAOn} with that collected by $\Eplus$:
\begin{align*}
    \EXP{w(M)} \geq \EXP{w(\Esafe \cap \Eplus)} \geq \frac{1}{2} \cdot \EXP{w(\Eplus)}.
\end{align*}
At this point, we can combine the result of \Cref{lem:plusVsGreedyBA} (that allows us to relate the expected weights of $\Eplus$ and the greedy solution with respect to the samples, $\Gs$) with \Cref{lem:greedyBA} (that relates the expected weights of the greedy and optimal solutions) to obtain that
\begin{align*}
    \EXP{w(\Eplus)} \geq \frac{1}{4} \cdot \EXP{w(\Gs)} \geq \frac{1}{12} \cdot \EXP{\opt}.
\end{align*}
Combining these results, we conclude that
\begin{align*}
    \EXP{w(M)} \geq \frac{1}{24} \cdot \EXP{\opt}.
\end{align*}
\end{proof}


\section{From \texorpdfstring{$\alpha$}{a}-Partition to Single-Sample Prophet Inequalities}

\label{sec:reduction}
In this section, we turn our attention to the \sspi{} problem under matroid feasibility constraints. We show how the recent work of \citet{RubinsteinWW20} can be applied on an interesting class of matroids that satisfy a certain property, called $\alpha$-partition. For this class, we provide improved competitive guarantees (essentially by a factor of $2$) for the \sspi{} problem, comparing to those following from the reduction to \oos \cite{AzarKW14}. 

We refer to a matroid $\Mat=(E,\I)$ over a ground set $E$ as a {\em simple} partition matroid\footnote{We use the term ``simple'' in order to distinguish the partition matroid from its common definition, where, from each set $E_l$ of the partition, more than one element might be collected.}, if there exists some partition $\bigcup_{l \in [k]} E_l $ of $E$ such that $I \in \I$ if and only if $|I \cap E_l| \leq 1$ for each $l \in [k]$. In other words, a set is independent only if it contains at most one element from each set of the partition.

We consider the following property of several matroids, called $\alpha$-partition (slightly adapted from~\citet{BDGIT09}):

\begin{definition}[$\alpha$-Partition Property] \label{def:alphapartition}
A matroid $\Mat=(E,\I)$ satisfies an $\alpha$-partition property for some $\alpha \geq 1$ if for any weight vector $w$ on the elements, after sequentially observing the weight of a (possibly random) subset $S \subset E$ of the elements selected independently of the weights, one can define a simple partition matroid $\Mat' = (E' , \I')$ on a ground set $E' \subseteq E \setminus S$, such that
\begin{align*}
    \EXP{\max_{I' \in \I'} w(I')} \geq \frac{1}{\alpha} \cdot \max_{I \in \I} w(I)
    \qquad \text{and} \qquad
    \I' \subseteq \I,
\end{align*}
where the expectation is taken over any randomness in the partitioning procedure.
\end{definition}

In fact the above definition is a slight adaptation of one in \citep{BDGIT09}. 
Specifically, we additionally capture the case where the transformation may observe the weight of a subset $S \subset E$ of the elements, and that this $S$ is never included in the ground set of the produced partition matroid. Note that the above definition also permits transformations that do not depend on the weights of \emph{any} sample (i.e., $S=\emptyset$), or that are deterministic. We prove the following meta-theorem for any matroid with an $\alpha$-partition property:

\begin{theorem}
\label[theorem]{thm:reduction}
For any matroid $\Mat$ that satisfies an $\alpha$-partition property for some $\alpha \geq 1$, there exists a ${2\alpha}$-competitive policy for the corresponding \sspi~problem. Further, if the $\alpha$-partitioning can be performed in polynomial time, then the policy is also efficient.
\end{theorem}

\begin{proof}

Given any matroid $\Mat = (E,\I)$ which satisfies an $\alpha$-partition property,
we describe how to construct a simple $2\alpha$-competitive policy for the \sspi{} on $\Mat$. The policy proceeds in two phases: (Offline phase) We construct a simple partition matroid $\Mat'=(E',\I')$ from $\Mat$, using its $\alpha$-partition property. For each element $e \in S\subset E$ that needs to be observed by the transformation, we feed the corresponding sample $s_e$. Let $\bigcup_{l \in [k]} E_l = E'$ be the constructed partition. For each group $E_l$ of $\Mat'$, we set a threshold $\tau_l = \max_{e \in E_l} s_e$, equal to the value of the largest sample of the elements included in $E_l$. Finally, we initialize $I=\emptyset$. (Online phase) For each element $e\in E$ arriving in adversarial order, we immediately reject it (without even observing the associated reward), if $e \notin E'$. Otherwise, assuming that $e\in E_l$ for some $l \in [k]$, we accept $e$ and add it to $I$ if and only if (i) $I \cap E_l = \emptyset$ (i.e., no other element in $E_l$ has been accepted so far) and (ii) the reward satisfies $r_e > \tau_l$. 

We now show that the above policy is $2\alpha$-competitive for the \sspi{} problem on $\Mat$. First observe that, by construction of the policy, the set of elements collected satisfies $I \in \I'$ and, thus, $I \in \I$ by definition of the $\alpha$-partition property. In order to establish the competitiveness of the policy, we first note that the maximum-reward independent set of $\Mat'$ is simply the collection of maximum-reward elements in each group of the partition. Note further that the online phase of our policy consists of running several instances of the algorithm of \cite{RubinsteinWW20}, one for each partition. Therefore, by linearity of expectation, our policy collects in expectation a $\nicefrac{1}{2}$-fraction of the expected maximum-reward within each partition and, hence, is $2$-competitive against the optimal policy for $\M'$.  
Finally, given that the partitioning is performed using the samples, the expected maximum-reward independent set of $\Mat'$ has the same expectation as if the partitioning was performed using the rewards, since the samples and rewards are identically distributed. By applying \Cref{def:alphapartition}, we can conclude that our policy is $2\alpha$-competitive for $\Mat.$ 
\end{proof}

\begin{remark}
We remark that \Cref{thm:reduction} can be extended in the case of other combinatorial problems (not necessarily matroids) that satisfy a partitioning property, similar to \Cref{def:alphapartition}.
\end{remark}

As an application of \Cref{thm:reduction}, we obtain improved \sspi{}s for a number of settings considered in \citet{AzarKW14}, displayed in last four entries in \Cref{table:main}. Indeed, the improvement in these results comes by noticing that each of these matroid environments satisfy the $\alpha$-partition property for constant $\alpha$, and rely on a $4$-competitive single-choice \oos{}. As a result of \Cref{thm:reduction} and \citet{RubinsteinWW20}, we can replace this \oos{} with a $2$-competitive \sspi{}, thus improving the competitive guarantee by a factor of $2$.

\section{From Order-Oblivious Secretaries to Pointwise-\sspi{}s}
\label{sec:sspitooos}

Since \emph{direct} \sspi{}s not only appear to offer better approximation than \oos{} policies, but also obviously have access to more information (the samples) comparing to an \oos{} algorithm, one can hope to derive constant-factor \sspi{}s even where according \oos{} policies fail to exist. 
For example, a major open problem in optimal stopping theory is the existence of 
$\mathcal{O}(1)$-competitive policies for the matroid secretary problem.
While Babaioff, Immorlica, and Kleinberg \citep{BIK07}
conjectured the existence of such a policy, and $\mathcal{O}(1)$-competitve
policies are known for many special cases \citep[see, e.g., ][]{STV21},
the state-of-the-art for general matroids is a $\O(\log\log(\mathrm{rank}))$-competitive policy \cite{FSZ14,Lachish14}. In stark contrast, \citet{KleinbergW12} give an \emph{optimal} $2$-competitive policy for the related matroid
prophet inequality problem under full distributional knowledge. 
Thus, given that the existence of an $\mathcal{O}(1)$-competitive policy for the \sspi{} problem on general matroids is a persistent open question, it is natural to ask whether the problem exhibits similarities in terms of hardness with its (order-oblivious) secretary counterpart. In this section, we make progress towards the answer to this question, by relating the OOS problem with a wide class of \sspi{} policies, which we refer to as ``pointwise''-SSPIs (or ``\psspi{}s''). 

\begin{definition}[\psspi{}]
An \sspi{} policy is called \psspi{}, if it maintains its competitive guarantee when the reward and sample of each element $e$ are generated as follows: given two arbitrary non-negative weights $a_{e,1}$ and $a_{e,2}$, we flip an independent fair coin to decide whether $r_e = a_{e,1}$ and $s_{e} = a_{e,2}$, or the opposite.
\end{definition}

Note that, as opposed to the \sspi{} problem, where each reward and sample are drawn independently from the same distribution, 
\psspi{} allows
for the reward and sample to be \emph{correlated}. Further, by the exact same reduction (and proof) presented in \cite{AzarKW14}, which reduces \sspi{} to \oos{}, one can in fact show the following stronger result:

\begin{theorem}[\citet{AzarKW14}] \label[theorem]{lem:OOStoPSSPI} Given any $\alpha$-competitive \oos{}, one can construct an $\alpha$-competitive \psspi{}.
\end{theorem}

An important observation is that every known \sspi{} policy is \emph{also} a \psspi{} policy with the same competitive guarantee. 

\begin{observation}\label[observation]{obs:psspi}
Every known \sspi{} policy, including the single-choice \sspi{} of \citet{RubinsteinWW20}, any \sspi{} in \citet{AzarKW14}, and our policies, falls into the class of \psspi{}s.
\end{observation}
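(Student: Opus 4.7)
The plan is to verify the observation by inspecting each known SSPI policy and pointing out that the competitive analysis only invokes two features of the reward/sample pair $(r_e, s_e)$: (a) the \emph{exchange symmetry} that, conditional on the unordered pair of realizations, each of the two values is equally likely to be the reward, and (b) the relative \emph{ordering} among values across elements. Neither feature uses the fact that $r_e$ and $s_e$ are i.i.d.\ draws from a shared distribution; both survive the P-SSPI generation process in which two arbitrary nonnegative weights $a_{e,1}, a_{e,2}$ are fixed and then a fair coin determines which becomes the reward. So the natural route is to re-read each proof and confirm that all probabilistic arguments depend only on coin flips and value comparisons.

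For the policies developed in this paper (general matching with edge arrivals, bipartite matching with vertex arrivals, transversal matroids, and budget-additive combinatorial auctions) this is essentially a tautology once the equivalence claims (Claims~\ref{claim:equivalence}, \ref{claim:equivalenceBipartite}, \ref{claim:equivalentBA}) are in hand. The offline simulations (\cref{AlgoOff,AlgoBipOff,AlgoBAOff}) are already written as: fix two values $a_{e,1}, a_{e,2}$ per element, then process them in decreasing order, flipping a fair coin on first encounter to assign them to $R$ or $S$. All subsequent safe-set lemmas (\cref{lem:matching:safeprob,lem:lowerBoundProbBipartite,lem:safeProbBA} and their downstream inequalities) are proven pointwise in the realization of $A$ and the coin flips -- i.e., they never integrate over $D_e$. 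Hence if one \emph{replaces} the two i.i.d.\ draws with the fixed pair $(a_{e,1}, a_{e,2})$ provided by the P-SSPI generator, each of these lemmas holds verbatim, and so do the resulting competitive bounds $16, 8, 8, 24$.

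For the single-choice SSPI of \cite{RubinsteinWW20}, the argument is the standard one: the algorithm uses each element's sample as a threshold for its reward. The $2$-competitive guarantee is obtained by considering, for every element, the event that the realized value exceeds the sample, which by exchange symmetry has probability $1/2$ conditional on the unordered pair. This is exactly the information available in the pointwise model, so the bound transfers. For the \sspi{}s in \cite{AzarKW14}, their reduction from an \oos{} policy to a corresponding \sspi{} plugs in the single sample as a stand-in for the statistical phase; the correctness of the reduction is precisely the pointwise argument captured in \cref{lem:OOStoPSSPI}. Since every \sspi{} in \cite{AzarKW14} arises from such a reduction, each of them is a \psspi{}.

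The main (and only) obstacle is that the statement ranges over a heterogeneous list of algorithms, so a formal proof requires a short inspection per policy rather than a single unified argument. However, no new ideas are required: in every instance, the competitive analysis either already uses the coin-flip/pair-of-values formalism explicitly, or can be rephrased to do so without changing a single inequality. Accordingly, the observation can be justified by this case analysis, which in each case amounts to noting that no step of the original proof appeals to the precise shape of $D_e$ beyond the symmetry between reward and sample.
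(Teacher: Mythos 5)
The paper states this observation without any accompanying proof, so there is no official argument to compare against; your case analysis supplies precisely the justification the paper implicitly relies on. All three cases are handled correctly: the paper's own analyses are carried out via offline simulations that already fix the pair $(a_{e,1},a_{e,2})$ and condition on the realization of $A$ (e.g., \cref{lem:matching:safectononsafe} is stated ``for any realization of the edge weights in $A$''), the \cite{AzarKW14} policies are exactly the content of \cref{lem:OOStoPSSPI}, and the \cite{RubinsteinWW20} analysis uses only the exchange symmetry of each reward/sample pair.
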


Perhaps surprisingly, we can establish the following result, which is a partial converse to the reduction in \Cref{lem:OOStoPSSPI}: 

\begin{theorem}\label[theorem]{thm:oosAndPsspiEquivalent}
For any downward-closed feasible set of elements, given any $\alpha$-competitive \psspi{}, one can construct a $2\alpha$-competitive \oos{}. 
\end{theorem}
\begin{proof}
Given an $\alpha$-competitive policy $\mathcal{P}$ for the \psspi{} problem, we construct a policy $\mathcal{A}$ for the \oos{} problem as follows:
\paragraph{Phase 1} Let $k \sim \mathrm{Binomial}(n, \nicefrac{1}{2})$
be a random number of elements. Observe (without collecting) the rewards of the first $k$ elements in uniformly random order, and feed them to $\mathcal{P}$ as the \emph{samples} of the corresponding elements. For the elements that are not yet observed, we feed to $\mathcal{P}$ \emph{zeros} as the corresponding samples. 
\paragraph{Phase 2} For the remaining elements arriving in adversarial order, the \oos{} policy  $\mathcal{A}$ mimics the decisions of $\mathcal{P}$ (i.e., the \oos{} policy accepts if and only if $\mathcal{P}$ accepts), skipping the elements that are parsed in Phase 1.

We now show that the above policy is $2\alpha$-competitive for the \oos{} problem. Let $\opt(\text{\oos{}})$ be the optimal reward for the \oos{} problem on some instance, and let $\opt(\text{\psspi{}})$ be the prophet's reward on the corresponding instance where the rewards of $k$ elements chosen uniformly at random are set to zero.
For our selection of $k$, it is not difficult to verify that the reward of each element is zero (and its weight is provided to $\mathcal{P}$ as a sample) with probability half.
Hence, given that the optimal solution to \oos{} is feasible for \psspi{} and the reward of each element ``survives'' with probability half, it follows that
\[
    \EXP{\opt(\text{\psspi{}})} \geq \frac{1}{2} \cdot \opt(\text{\oos{}}).
\]
Let $w(\mathcal{P})$ and $w(\mathcal{A})$ be the reward collected by policy $\mathcal{P}$ and $\mathcal{A}$, respectively, in the above reduction. Note that by construction of our reduction, every element that is parsed in Phase 1 of $\mathcal{A}$, has zero reward and, thus, is never collected by $\mathcal{P}$ in the online phase, without loss of generality. Given that these elements are skipped by $\mathcal{A}$ in Phase 2 and, for the rest of the elements, $\mathcal{A}$ mimics the decisions of $\mathcal{P}$, we can see that $w(P) = w(A)$. Finally, since $\mathcal{P}$ is an $\alpha$-competitive \psspi{}, we can conclude that
\begin{align*}
    \EXP{w(\mathcal{A})} =     \EXP{w(\mathcal{P})} \geq \frac{1}{\alpha} \cdot \EXP{\opt(\text{\psspi{}})} \geq \frac{1}{2\alpha} \cdot \opt(\text{\oos{}}),
\end{align*}
thus showing that $\mathcal{A}$ is a $2\alpha$-competitive \oos{} policy. We remark that the downward-closedness assumption on the feasible set is only used for guaranteeing that rejecting elements in Phase 1 of $\mathcal{A}$ cannot lead to an infeasible solution. 
\end{proof}
An immediate corollary of \Cref{obs:psspi} and \Cref{thm:oosAndPsspiEquivalent} is that any \emph{known} \sspi{} policy effectively provides an algorithm for the corresponding (order-oblivious) secretary problem. By applying \Cref{thm:oosAndPsspiEquivalent} to \Cref{thm:budgetAdditive}, we obtain a $48$-competitive \oos{} policy for budget-additive combinatorial auctions. To the best of our knowledge, this is the first order-oblivious secretary policy for this problem. By \Cref{thm:oosAndPsspiEquivalent}, solving the \psspi{} problem on a downward-closed feasible set (including general matroids) \emph{immediately implies} an \oos{} policy on the same feasible set.  Hence, unless there exists a $\mathcal{O}(1)$-competitive \sspi{} which is \emph{not} a \psspi{}, the \sspi{} problem is at least as \emph{as hard} as the \oos{} problem (up to constant factors)! Since all currently known \sspi{} are \psspi{}, it is natural to ask whether this observation can be generalized:

{\em Given any $\alpha$-competitive \sspi{} policy, does there also exist an $\mathcal{O}( \alpha)$-competitive \psspi{} policy? Or, stated differently, are the classes of \sspi{} and \psspi{} equivalent up to constant factors?}

In case of a positive answer to the previous problem,
solving the constant-factor \sspi{} problem on general matroids would imply also resolving the matroid secretary conjecture of \citet{BIK07}. In any case, and more generally: since \psspi{}s provide constant-factor approximations to the same class of problems as \oos{}s do, only \emph{non-}\psspi{} might be more powerful, i.e., only with such policies, the \sspi{} paradigm can ever cover a greater scope of settings than \oos{}.

\section*{Acknowledgments}
This work was supported by the ERC Advanced Grant 788893 AMDROMA ``Algorithmic and Mechanism Design Research in Online Markets'', the MIUR PRIN project ALGADIMAR ``Algorithms, Games, and Digital Markets'', and partially funded by NSF grant 2019844.

\bibliographystyle{plainnat}
\bibliography{references}

\begin{thebibliography}{49}
\providecommand{\natexlab}[1]{#1}
\providecommand{\url}[1]{\texttt{#1}}
\expandafter\ifx\csname urlstyle\endcsname\relax
  \providecommand{\doi}[1]{doi: #1}\else
  \providecommand{\doi}{doi: \begingroup \urlstyle{rm}\Url}\fi

\bibitem[Alaei(2014)]{Alaei11}
Saeed Alaei.
\newblock Bayesian combinatorial auctions: Expanding single buyer mechanisms to many buyers.
\newblock \emph{{SIAM} J. Comput.}, 43\penalty0 (2):\penalty0 930--972, 2014.

\bibitem[Alaei et~al.(2012)Alaei, Hajiaghayi, and Liaghat]{AHL12}
Saeed Alaei, MohammadTaghi Hajiaghayi, and Vahid Liaghat.
\newblock Online prophet-inequality matching with applications to ad allocation.
\newblock In \emph{{EC}}, pages 18--35. {ACM}, 2012.

\bibitem[Azar et~al.(2013)Azar, Micali, Daskalakis, and Weinberg]{amdw13}
Pablo Azar, Silvio Micali, Constantinos Daskalakis, and S~Matthew Weinberg.
\newblock Optimal and efficient parametric auctions.
\newblock In \emph{SODA}, pages 596--604. {SIAM}, 2013.

\bibitem[Azar et~al.(2014)Azar, Kleinberg, and Weinberg]{AzarKW14}
Pablo~Daniel Azar, Robert Kleinberg, and S.~Matthew Weinberg.
\newblock Prophet inequalities with limited information.
\newblock In \emph{{SODA}}, pages 1358--1377. {SIAM}, 2014.

\bibitem[Babaioff et~al.(2007)Babaioff, Immorlica, and Kleinberg]{BIK07}
Moshe Babaioff, Nicole Immorlica, and Robert Kleinberg.
\newblock Matroids, secretary problems, and online mechanisms.
\newblock In \emph{SODA}, pages 434--443. {SIAM}, 2007.

\bibitem[Babaioff et~al.(2009)Babaioff, Dinitz, Gupta, Immorlica, and Talwar]{BDGIT09}
Moshe Babaioff, Michael Dinitz, Anupam Gupta, Nicole Immorlica, and Kunal Talwar.
\newblock Secretary problems: weights and discounts.
\newblock In \emph{{SODA}}, pages 1245--1254. {SIAM}, 2009.

\bibitem[Caramanis et~al.(2021)Caramanis, Faw, Papadigenopoulos, and Pountourakis]{CaramanisEtAl21}
Constantine Caramanis, Matthew Faw, Orestis Papadigenopoulos, and Emmanouil Pountourakis.
\newblock Single-sample prophet inequalities revisited.
\newblock \emph{CoRR}, abs/2103.13089, 2021.

\bibitem[Caramanis et~al.(2022)Caramanis, D{\"{u}}tting, Faw, Fusco, Lazos, Leonardi, Papadigenopoulos, Pountourakis, and Reiffenh{\"{a}}user]{CaramanisDFFLLP22}
Constantine Caramanis, Paul D{\"{u}}tting, Matthew Faw, Federico Fusco, Philip Lazos, Stefano Leonardi, Orestis Papadigenopoulos, Emmanouil Pountourakis, and Rebecca Reiffenh{\"{a}}user.
\newblock Single-sample prophet inequalities via greedy-ordered selection.
\newblock In \emph{{SODA}}, pages 1298--1325. {SIAM}, 2022.

\bibitem[Chawla et~al.(2010)Chawla, Hartline, Malec, and Sivan]{ChawlaHMS10}
Shuchi Chawla, Jason~D. Hartline, David~L. Malec, and Balasubramanian Sivan.
\newblock Multi-parameter mechanism design and sequential posted pricing.
\newblock In \emph{{STOC}}, pages 311--320. {ACM}, 2010.

\bibitem[Clarke(1971)]{c97}
Edward~H Clarke.
\newblock Multipart pricing of public goods.
\newblock \emph{Public choice}, 11:\penalty0 17--33, 1971.

\bibitem[Correa et~al.(2019)Correa, D{\"u}tting, Fischer, and Schewior]{correa2019prophet}
Jos{\'e} Correa, Paul D{\"u}tting, Felix Fischer, and Kevin Schewior.
\newblock Prophet inequalities for iid random variables from an unknown distribution.
\newblock In \emph{EC}, pages 3--17. {ACM}, 2019.

\bibitem[Correa et~al.(2024)Correa, Cristi, Epstein, and Soto]{CorreaCES24}
Jos{\'{e}} Correa, Andr{\'{e}}s Cristi, Boris Epstein, and Jos{\'{e}}~A. Soto.
\newblock Sample-driven optimal stopping: From the secretary problem to the i.i.d. prophet inequality.
\newblock \emph{Mathematics of Operations Research}, 49:\penalty0 441--475, 2024.

\bibitem[Correa and Cristi(2023)]{CorreaC23}
Jos{\'{e}}~R. Correa and Andr{\'{e}}s Cristi.
\newblock A constant factor prophet inequality for online combinatorial auctions.
\newblock In \emph{{STOC}}, pages 686--697. {ACM}, 2023.

\bibitem[Correa et~al.(2020{\natexlab{a}})Correa, Cristi, Epstein, and Soto]{CorreaCES20}
Jos{\'{e}}~R. Correa, Andr{\'{e}}s Cristi, Boris Epstein, and Jos{\'{e}}~A. Soto.
\newblock The two-sided game of {G}oogol and sample-based prophet inequalities.
\newblock In \emph{{SODA}}, pages 2066--2081. {SIAM}, 2020{\natexlab{a}}.

\bibitem[Correa et~al.(2020{\natexlab{b}})Correa, Cristi, Epstein, and Soto]{correa2020sample}
Jos{\'{e}}~R. Correa, Andr{\'{e}}s Cristi, Boris Epstein, and Jos{\'{e}}~A. Soto.
\newblock Sample-driven optimal stopping: From the secretary problem to the i.i.d. prophet inequality.
\newblock \emph{CoRR}, abs/2011.06516, 2020{\natexlab{b}}.

\bibitem[Correa et~al.(2021)Correa, D{\"{u}}tting, Fischer, Schewior, and Ziliotto]{CorreaDFSZ21}
Jos{\'{e}}~R. Correa, Paul D{\"{u}}tting, Felix~A. Fischer, Kevin Schewior, and Bruno Ziliotto.
\newblock Unknown {I.I.D.} prophets: Better bounds, streaming algorithms, and a new impossibility (extended abstract).
\newblock In \emph{{ITCS}}, volume 185 of \emph{LIPIcs}, pages 86:1--86:1. Schloss Dagstuhl - Leibniz-Zentrum f{\"{u}}r Informatik, 2021.

\bibitem[Cristi and Ziliotto(2024)]{CristiZ24}
Andr{\'{e}}s Cristi and Bruno Ziliotto.
\newblock Prophet inequalities require only a constant number of samples.
\newblock In \emph{{STOC}}, 2024.
\newblock forthcoming.

\bibitem[Dhangwatnotai et~al.(2010)Dhangwatnotai, Roughgarden, and Yan]{dry10}
Peerapong Dhangwatnotai, Tim Roughgarden, and Qiqi Yan.
\newblock Revenue maximization with a single sample.
\newblock In David~C. Parkes, Chrysanthos Dellarocas, and Moshe Tennenholtz, editors, \emph{EC}, pages 129--138. {ACM}, 2010.

\bibitem[Dimitrov and Plaxton(2012)]{DP08}
Nedialko~B. Dimitrov and C.~Greg Plaxton.
\newblock Competitive weighted matching in transversal matroids.
\newblock \emph{Algorithmica}, 62\penalty0 (1-2):\penalty0 333--348, 2012.

\bibitem[D{\"{u}}tting and Kleinberg(2015)]{DuttingK15}
Paul D{\"{u}}tting and Robert Kleinberg.
\newblock Polymatroid prophet inequalities.
\newblock In \emph{{ESA}}, volume 9294 of \emph{Lecture Notes in Computer Science}, pages 437--449. Springer, 2015.

\bibitem[D{\"{u}}tting et~al.(2020{\natexlab{a}})D{\"{u}}tting, Feldman, Kesselheim, and Lucier]{DuttingFKL20}
Paul D{\"{u}}tting, Michal Feldman, Thomas Kesselheim, and Brendan Lucier.
\newblock Prophet inequalities made easy: Stochastic optimization by pricing nonstochastic inputs.
\newblock \emph{{SIAM} J. Comput.}, 49\penalty0 (3):\penalty0 540--582, 2020{\natexlab{a}}.

\bibitem[D{\"{u}}tting et~al.(2020{\natexlab{b}})D{\"{u}}tting, Kesselheim, and Lucier]{DuttingKL20}
Paul D{\"{u}}tting, Thomas Kesselheim, and Brendan Lucier.
\newblock An {O}(log log m) prophet inequality for subadditive combinatorial auctions.
\newblock In \emph{{FOCS}}, pages 306--317. {IEEE}, 2020{\natexlab{b}}.

\bibitem[D{\"{u}}tting et~al.(2021{\natexlab{a}})D{\"{u}}tting, Fusco, Lazos, Leonardi, and Reiffenh{\"{a}}user]{DFLLR21}
Paul D{\"{u}}tting, Federico Fusco, Philip Lazos, Stefano Leonardi, and Rebecca Reiffenh{\"{a}}user.
\newblock Prophet inequalities for matching with a single sample.
\newblock \emph{CoRR}, abs/2104.02050, 2021{\natexlab{a}}.

\bibitem[D{\"{u}}tting et~al.(2021{\natexlab{b}})D{\"{u}}tting, Fusco, Lazos, Leonardi, and Reiffenh{\"{a}}user]{Duetting20}
Paul D{\"{u}}tting, Federico Fusco, Philip Lazos, Stefano Leonardi, and Rebecca Reiffenh{\"{a}}user.
\newblock Efficient two-sided markets with limited information.
\newblock In \emph{{STOC}}, pages 1452--1465. {ACM}, 2021{\natexlab{b}}.

\bibitem[Ehsani et~al.(2018)Ehsani, Hajiaghayi, Kesselheim, and Singla]{ehsani2018prophet}
Soheil Ehsani, MohammadTaghi Hajiaghayi, Thomas Kesselheim, and Sahil Singla.
\newblock Prophet secretary for combinatorial auctions and matroids.
\newblock In \emph{SODA}, pages 700--714. {SIAM}, 2018.

\bibitem[Ezra et~al.(2020)Ezra, Feldman, Gravin, and Tang]{EFGT20}
Tomer Ezra, Michal Feldman, Nick Gravin, and Zhihao~Gavin Tang.
\newblock Online stochastic max-weight matching: Prophet inequality for vertex and edge arrival models.
\newblock In \emph{{EC}}, pages 769--787. {ACM}, 2020.

\bibitem[Feldman et~al.(2015)Feldman, Gravin, and Lucier]{FeldmanGL15}
Michal Feldman, Nick Gravin, and Brendan Lucier.
\newblock Combinatorial auctions via posted prices.
\newblock In \emph{{SODA}}, pages 123--135. {SIAM}, 2015.

\bibitem[Feldman et~al.(2014)Feldman, Svensson, and Zenklusen]{FSZ14}
Moran Feldman, Ola Svensson, and Rico Zenklusen.
\newblock A simple {O} (log log (rank))-competitive algorithm for the matroid secretary problem.
\newblock In \emph{SODA}, pages 1189--1201. {SIAM}, 2014.

\bibitem[Feldman et~al.(2016)Feldman, Svensson, and Zenklusen]{FSZ16}
Moran Feldman, Ola Svensson, and Rico Zenklusen.
\newblock Online contention resolution schemes.
\newblock In \emph{{SODA}}, pages 1014--1033. {SIAM}, 2016.

\bibitem[Gravin and Wang(2019)]{GravinW19}
Nikolai Gravin and Hongao Wang.
\newblock Prophet inequality for bipartite matching: Merits of being simple and non adaptive.
\newblock In \emph{{EC}}, pages 93--109. {ACM}, 2019.

\bibitem[Groves(1973)]{g73}
Theodore Groves.
\newblock Incentives in teams.
\newblock \emph{Econometrica}, 41:\penalty0 617--631, 1973.

\bibitem[Guo et~al.(2021)Guo, Huang, Tang, and Zhang]{Guo0T021}
Chenghao Guo, Zhiyi Huang, Zhihao~Gavin Tang, and Xinzhi Zhang.
\newblock Generalizing complex hypotheses on product distributions: Auctions, prophet inequalities, and pandora's problem.
\newblock In \emph{{COLT}}, volume 134 of \emph{PMLR}, pages 2248--2288. {PMLR}, 2021.

\bibitem[Hajiaghayi et~al.(2007)Hajiaghayi, Kleinberg, and Sandholm]{hajiaghayi2007automated}
Mohammad~Taghi Hajiaghayi, Robert~D. Kleinberg, and Tuomas Sandholm.
\newblock Automated online mechanism design and prophet inequalities.
\newblock In \emph{{AAAI}}, pages 58--65. {AAAI} Press, 2007.

\bibitem[Kaplan et~al.(2020)Kaplan, Naori, and Raz]{KaplanNR20}
Haim Kaplan, David Naori, and Danny Raz.
\newblock Competitive analysis with a sample and the secretary problem.
\newblock In \emph{{SODA}}, pages 2082--2095. {SIAM}, 2020.

\bibitem[Kaplan et~al.(2022)Kaplan, Naori, and Raz]{KaplanNR22}
Haim Kaplan, David Naori, and Danny Raz.
\newblock Online weighted matching with a sample.
\newblock In \emph{{SODA}}, pages 1247--1272. {SIAM}, 2022.

\bibitem[Kleinberg and Weinberg(2012)]{KleinbergW12}
Robert Kleinberg and S.~Matthew Weinberg.
\newblock Matroid prophet inequalities.
\newblock In \emph{{STOC}}, pages 123--136. {ACM}, 2012.

\bibitem[Korula and P{\'{a}}l(2009)]{KP09}
Nitish Korula and Martin P{\'{a}}l.
\newblock Algorithms for secretary problems on graphs and hypergraphs.
\newblock In \emph{{ICALP} {(2)}}, volume 5556 of \emph{Lecture Notes in Computer Science}, pages 508--520. Springer, 2009.

\bibitem[Krengel and Sucheston(1977)]{KS77}
Ulrich Krengel and Louis Sucheston.
\newblock Semiamarts and finite values.
\newblock \emph{Bull. Am. Math. Soc.}, 83:\penalty0 745--747, 1977.

\bibitem[Krengel and Sucheston(1978)]{KrengelS78}
Ulrich Krengel and Louis Sucheston.
\newblock On semiamarts, amarts, and processes with finite value.
\newblock \emph{Adv. in Probab. and Related Topics}, 4:\penalty0 197--–266, 1978.

\bibitem[Lachish(2014)]{Lachish14}
Oded Lachish.
\newblock O(log log rank) competitive ratio for the matroid secretary problem.
\newblock In \emph{{FOCS}}, pages 326--335. {IEEE} Computer Society, 2014.

\bibitem[Ma et~al.(2016)Ma, Tang, and Wang]{ma2016simulated}
Tengyu Ma, Bo~Tang, and Yajun Wang.
\newblock The simulated greedy algorithm for several submodular matroid secretary problems.
\newblock \emph{Theory Comput. Syst.}, 58\penalty0 (4):\penalty0 681--706, 2016.

\bibitem[Myerson(1981)]{M81}
Roger~B Myerson.
\newblock Optimal auction design.
\newblock \emph{Math. {O}per. {R}es.}, 6\penalty0 (1):\penalty0 58--73, 1981.

\bibitem[Rubinstein(2016)]{R16}
Aviad Rubinstein.
\newblock Beyond matroids: secretary problem and prophet inequality with general constraints.
\newblock In \emph{{STOC}}, pages 324--332. {ACM}, 2016.

\bibitem[Rubinstein and Singla(2017)]{RS17}
Aviad Rubinstein and Sahil Singla.
\newblock Combinatorial prophet inequalities.
\newblock In \emph{{SODA}}, pages 1671--1687. {SIAM}, 2017.

\bibitem[Rubinstein et~al.(2020)Rubinstein, Wang, and Weinberg]{RubinsteinWW20}
Aviad Rubinstein, Jack Z~. Wang, and S.~Matthew Weinberg.
\newblock Optimal single-choice prophet inequalities from samples.
\newblock In \emph{{ITCS}}, volume 151 of \emph{LIPIcs}, pages 60:1--60:10. Schloss Dagstuhl - Leibniz-Zentrum f{\"{u}}r Informatik, 2020.

\bibitem[Samuel-Cahn(1984)]{SamuelCahn84}
Ester Samuel-Cahn.
\newblock Comparison of threshold stop rules and maximum for independent nonnegative random variables.
\newblock \emph{Ann. Probab.}, 12:\penalty0 1213--1216, 1984.

\bibitem[Soto(2013)]{Soto11}
Jos{\'{e}}~A. Soto.
\newblock Matroid secretary problem in the random-assignment model.
\newblock \emph{{SIAM} J. Comput.}, 42\penalty0 (1):\penalty0 178--211, 2013.

\bibitem[Soto et~al.(2021)Soto, Turkieltaub, and Verdugo]{STV21}
Jos{\'e}~A Soto, Abner Turkieltaub, and Victor Verdugo.
\newblock Strong algorithms for the ordinal matroid secretary problem.
\newblock \emph{Math. Oper. Res.}, 46:\penalty0 642--673, 2021.

\bibitem[Vickrey(1961)]{v61}
William Vickrey.
\newblock Counterspeculation, auctions, and competitive sealed tenders.
\newblock \emph{J. Finance}, 16\penalty0 (1):\penalty0 8--37, 1961.

\end{thebibliography}
\end{document}